\documentclass[12pt]{elsarticle}

\newcommand{\rev}[1]{#1}

\usepackage{url}
\usepackage{geometry}
\geometry{margin=1in}

\usepackage{titlesec}

\titleformat*{\section}{\Large\bfseries}
\titleformat*{\subsection}{\large\bfseries}
\titleformat*{\subsubsection}{\normalsize\bfseries}

\usepackage{amsmath,amsthm,amsfonts,breqn}

\usepackage{graphicx,xcolor}

\usepackage{enumitem}
\setlist{nolistsep}

\theoremstyle{plain}
\newtheorem{theorem}{Theorem}[section]
\newtheorem{corollary}[theorem]{Corollary}
\newtheorem{lemma}[theorem]{Lemma}

\newtheorem*{theorem*}{Theorem}
\newtheorem{open}{Open question}

\theoremstyle{definition}
\newtheorem{remark}[theorem]{Remark}
\newtheorem{definition}[theorem]{Definition}
\newtheorem{example}[theorem]{Example}

\usepackage{amsfonts,amsbsy,amsmath,amssymb, amsthm}
\usepackage{verbatim} 
\usepackage{upgreek}
\usepackage{graphics}
\usepackage{color}
\usepackage{bbm}
\usepackage{mathtools}
\usepackage{xargs} 

\usepackage{algorithm}
\usepackage{algorithmic}

\numberwithin{equation}{section}
\newcommand{\discup}{\dot{\cup}}
\newcommand{\sm}{\setminus}
\renewcommand{\root}[1]{\sqrt{#1}}

\newcommand{\cS}{\mathcal{S}}

\newcommand{\Reals}{\mathbb{R}}

\theoremstyle{definition}

% Erel's macros

\newcommand{\partition}[2]{\textsc{Partitions}(#1,#2)}
\newcommand{\union}[2]{\textsc{Union}(#1,#2)}

\newcommand{\MMS}{\textsc{MMS}}
\newcommand{\mms}[4]{\MMS^{#2\text{-out-of-}#3}_#1\left(
#4
\right)}

\newcommand{\tfair}{$t$-fair}

\newcommand{\nmin}{\underline{n}}
\newcommand{\xsating}{$X$-saturating}
\newcommand{\xlsating}{$X_L$-saturating}

\usepackage{marvosym} 
\usepackage{caption} 
\usepackage{enumitem}

\graphicspath{{../graphics/}} %helpful if your graphic files are in another directory

% From here: https://tex.stackexchange.com/a/387731/20929
\newcounter{problem}
\makeatletter

\makeatother

% Erel's macros
\newcommand{\range}[2]{\in\{#1,\dots,#2\}}
\newcommand{\dottedline}{\hbox to 15cm{\leaders\hbox to 5pt{\hss.\hss}\hfil}}

% natbib-like citations
%\newcommand{\citet}{\cite}
%\newcommand{\citep}{\cite}

\title{Envy-free Matchings in Bipartite Graphs and their Applications to Fair Division }

\begin{document}

\author{Elad Aigner-Horev and Erel Segal-Halevi
\\
Ariel University
\\
Kiriat Hamada 3, Ariel 40700, Israel
\\
horev.elad@gmail.com, erelsgl@gmail.com
}
\date{}

\begin{abstract}
A matching in a bipartite graph with parts $X$ and $Y$ is called envy-free, if no unmatched vertex in $X$ is a adjacent to a matched vertex in $Y$. Every perfect matching is envy-free, but envy-free matchings exist even when perfect matchings do not.

We prove that every bipartite graph has a unique partition such that all envy-free matchings are contained in one of the partition sets. Using this structural theorem, we provide a polynomial-time algorithm for finding an envy-free matching of maximum cardinality. For edge-weighted bipartite graphs, we provide a polynomial-time algorithm for finding a maximum-cardinality envy-free matching of minimum total weight. 

We show how envy-free matchings can be used in various fair division problems with either continuous resources (``cakes'') or discrete ones. In particular, we propose a symmetric algorithm for proportional cake-cutting, an algorithm for $1$-out-of-$(2n-2)$ maximin-share allocation of discrete goods, and an algorithm for $1$-out-of-$\lfloor 2n/3\rfloor$ maximin-share allocation of discrete bads among $n$ agents.

\end{abstract}

\maketitle
\textbf{Keywords:} Fair Division, Cake cutting, Maximin Share, Bipartite Graphs, Perfect Matching, Maximum Matching.

\newpage
\section{Introduction}
Let $G := (X \discup Y,E)$ be a bipartite graph. 
A matching $M\subseteq E$ is called
\emph{perfect} if every vertex of $X \discup Y$ is adjacent to exactly one edge of $M$;
it is called 
\emph{\xsating} if every vertex of $X$ is adjacent to exactly one edge of $M$.
This paper studies the following relaxation of \xsating{} matching (where 
$X_M$ and $Y_M$ denote the vertices of $X$ and $Y$, respectively, that are \rev{incident} to edges of $M$).
\begin{definition}
\label{def:envy-free}
\label{def:efm}
Let $G := (X \discup Y, E)$ be a bipartite graph. A matching $M \subseteq E$ is said to be {\em envy-free w.r.t. $X$} if no vertex in $X \sm X_M$ is adjacent to any vertex in $Y_M$. 
\end{definition}
One may view $X$ as a set of people and $Y$ as a set of houses, 
where a person in $X$ is adjacent to all houses in $Y$ which he or she likes.
A matching $M\subseteq E$ denotes an assignment of houses to people who like them.
Throughout the paper, all envy-free matchings are taken w.r.t. $X$. In such a matching, an unmatched person $x \in X\sm X_M$ does not envy any matched person $x'\in X_M$, because $x$ does not like any matched house $y'\in Y_M$ anyway.

If a matching $M$ is \xsating{}, then $X_M = X$, and $M$ is clearly envy-free.
A graph admitting an \xsating{} matching is called \emph{$X$-saturated}; see Figure \ref{fig:efm-examples}(a).
Many graphs are not $X$-saturated, but still admit a non-empty envy-free matching; see Figure \ref{fig:efm-examples}(b).

In contrast, in some graphs the only envy-free matching is the empty matching $\emptyset$ (which is vacuously envy-free). A natural example is
an \emph{odd path} --- 
a path with $2k+1$ vertices, for some $k \geq 1$ --- where $X$ is identified with the larger class in the bipartition; see Figure \ref{fig:efm-examples}(c).
Consider any non-empty matching in such an odd path. Traverse the path from one of its ends towards the other end, until you encounter the first matched vertex. If this vertex is in $Y$, then the previous vertex is in $X$ and it is envious. If the first matched vertex is in $X$, then it is matched to the vertex after it in $Y$, and from that point onwards, every vertex in $X$ must be matched to the vertex after it in $Y$ in order not to envy. But the last vertex of the path is in $X$ and has no vertex after it, so it is envious.

\begin{figure}
\begin{center}
\includegraphics[height=3cm,width=14cm]{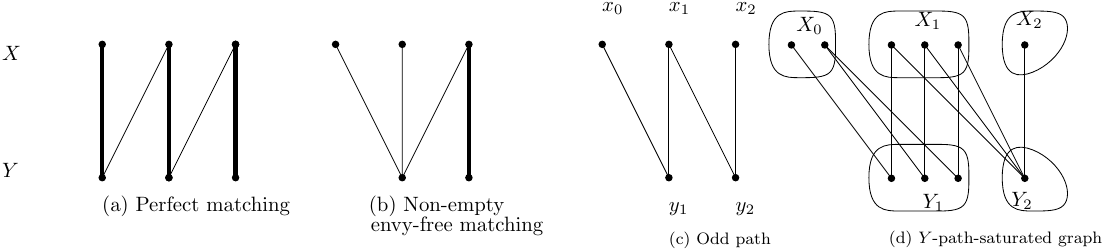}
\end{center}
\caption{\label{fig:efm-examples}
(a) An $X$-saturated graph. 
(b) A graph that is not $X$-saturated, but admits a non-empty envy-free matching, denoted in bold.
~~~
(c), (d)
Graphs in which the only envy-free matching is empty.
}
\end{figure}

The examples above invoke the following questions.
\begin{itemize}
\item What characterises the graphs that contain a non-empty envy-free matching?
\item Given a graph $G$, can an envy-free matching of maximum cardinality in $G$ be found efficiently?
\end{itemize}

\subsection{Envy-free matching and graph structure}
We answer these questions by proving a structural theorem for bipartite graphs.
We prove that in every bipartite graph, there is a unique partition of the vertices into two subsets --- ``good'' and ``bad'': the ``good'' subset is $X$-saturated (and thus contains the largest possible envy-free matching), while the ``bad'' subset has a structure similar to an odd path (and thus contains only an empty envy-free matching). The structure of this ``bad'' subset is defined formally below.
\begin{definition}
\label{def:snake}
A bipartite graph $G := (X \discup Y, E)$ is called  \emph{$Y$-path-saturated} if, for some $k\geq 1$, there exist partitions $X = X_0 \discup \cdots \discup X_k$ and $Y = Y_1 \discup \cdots \discup Y_k$ where for all $i\geq 1$:
\begin{itemize}
\item There is a perfect matching between vertices of $X_i$ and vertices of $Y_i$;
\item Every vertex in $Y_i$ is adjacent to some vertex in $X_{i-1}$.
\end{itemize}
\end{definition}
Every odd path with $|X|>|Y|$, as in Figure \ref{fig:efm-examples}(c), is $Y$-path-saturated. Figure \ref{fig:efm-examples}(d) shows another example of a $Y$-path-saturated graph; it can be seen that the structure of such a graph resembles that of an odd path.
Every $Y$-path-saturated graph is $Y$-saturated, but the opposite is not true, as shown by Figures \ref{fig:efm-examples}(a,b).%
\footnote{
Note that the empty graph is $Y$-path-saturated (where $X_i = Y_j = \emptyset$ for all $i\geq 0, j\geq 1$). 
Also note that a $Y$-path-saturated graph may have isolated vertices (vertices with degree 0) in $X$ --- such vertices are contained in $X_0$.
}
In any $Y$-path-saturated graph, the only envy-free matching is $\emptyset$. The proof is similar to the one for odd paths above; we omit it since it is implied by Theorem \ref{thm:structure}(e) below. 
Thus the $X$-saturated graphs and the $Y$-path-saturated graphs are two extreme cases:
the former contain the largest possible envy-free matching, while the latter contain only an empty envy-free matching. 
Our first result is that these two extremes are the building-blocks of \emph{all} bipartite graphs.
Below, $G[X',Y']$ denotes the subgraph of $G$ induced by the vertices $X'\discup Y'$. 

\newcommand{\theoremstructure}{
Every bipartite graph $G = (X \discup Y, E)$ admits a \emph{unique} partition $X = X_S\discup X_L$ and $Y = Y_S\discup Y_L$ satisfying the following three conditions:

(a) There are no edges between $X_S$ and $Y_L$;

(b) The subgraph $G[X_S,Y_S]$ is $Y$-path-saturated;

(c) The subgraph $G[X_L,Y_L]$
is $X$-saturated.

\noindent
Moreover, this unique partition has the following additional properties:

(d) Every \xlsating{} matching in $G[X_L,Y_L]$ is an envy-free matching in $G$.

(e) Every envy-free matching in $G$ is contained in $G[X_L,Y_L]$.
}

\begin{theorem}
\label{thm:structure}
\theoremstructure
\end{theorem}
For example, in 
Figure \ref{fig:efm-examples}(a), 
%as in any $X$-saturated graph, 
the entire graph is $G[X_L,Y_L]$, while $G[X_S,Y_S]$ is empty.
In Figure \ref{fig:efm-examples}(b), $G[X_S,Y_S]$ contains the two leftmost edges and $G[X_L,Y_L]$ contains the rightmost (bold) edge, and there is one more edge between $X_L$ and $Y_S$
(but no edges between $X_S$ and $Y_L$).
In Figures \ref{fig:efm-examples}(c,d), 
%as in any $Y$-path-saturated graph,
the entire graph is $G[X_S,Y_S]$, while $G[X_L,Y_L]$ is empty.

We call the unique partition of $G$, whose existence is guaranteed by Theorem \ref{thm:structure}, the \emph{EFM partition of $G$}.

As a corollary of Theorem \ref{thm:structure}, one gets several useful conditions on a graph $G$ admitting a non-empty envy-free matching. Two conditions are necessary and sufficient; the other is only sufficient.%
%\footnote{
~~Below, 
$N_G(X')$ denotes the neighborhood of a subset $X'\subseteq X$ in $G$, i.e.:
$
N_G(X') := \{y'\in Y: \exists x'\in X' \text{~such that~} (x',y')\in E\}
$.
%}

\begin{corollary}
\label{thm:sufficient}
A bipartite graph $G:=(X \discup Y, E)$ admits a non-empty envy-free matching ---

(a) if and only if the bipartite graph $(X\discup N_G(X), E)$ is not $Y$-path-saturated;

(b) if $|N_G(X)|\geq |X|\geq 1$;

(c) if and only if there is a subset $Y'\subseteq N_G(X)$ with $|N_G(Y')| \leq |Y'|$.
\end{corollary}

Part (a) shows that 
all the ``bad'' graphs (graphs with only an empty envy-free matching) are similar to the odd-path example --- they are all $Y$-path-saturated.

Parts (b) and (c) are similar to the condition in Hall's marriage theorem \citep{hall1935representatives}. Hall's theorem says that
if (and only if) 
$|N_G(X')|\geq |X'|$ for \emph{any} subset $X'\subseteq X$, then $G$ admits an \xsating{} matching.
The strong condition of Hall is sufficient for the strong property of having an \xsating{} matching; the weaker condition (b) is sufficient for the weaker property of having a non-empty envy-free matching.
We note that part (b) was first proved by \citet{Luria2013EnvyFree}; we present an alternative proof.

Corollary \ref{thm:sufficient}(b) can be slightly generalised to provide a lower bound on the cardinality of an envy-free matching.
\begin{corollary}
\label{thm:sufficient2}
Let $G:=(X \discup Y, E)$ be a bipartite graph with $|N_G(X)|\geq |X|\geq 1$.
If, for some integer $k\geq 1$,
every vertex in $N_G(X)$ has at least $k$ neighbors in $X$,
then $G$ admits an envy-free matching of cardinality at least $k$.
\end{corollary}

The structural theorem and its corollaries are proved in Section \ref{sec:pre}.

Once all envy-free matchings are ``captured'' within a specific subgraph $G[X_L,Y_L]$, it is easy to develop optimisation algorithms for them.
Below, the number of vertices in the smaller part of $G$ is denoted by $\nmin := \min(|X|,|Y|)$, and the number of edges by $m := |E|$.
\begin{theorem}
\label{thm:algos}
Given a bipartite graph $G = (X \discup Y, E)$,

(a) An envy-free matching of maximum cardinality in $G$ can be found in $O(m\root{\nmin})$ time. 

(b) Given an edge cost function $w : E \to \Reals_{\geq 0}$, 
an envy-free matching of minimum total cost among those of maximum cardinality can be found within 
$O(m\nmin+{\nmin}^2 \log {\nmin})$ time.
\end{theorem}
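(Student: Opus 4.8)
The plan is to reduce both problems to ordinary matching problems on the subgraph $G[X_L,Y_L]$ produced by Theorem~\ref{thm:structure}. The key preliminary observation is that a matching $M \subseteq E(G[X_L,Y_L])$ is envy-free in $G$ as soon as it saturates $X_L$: in that case all unmatched $X$-vertices of $M$ lie in $X_S$ while all $Y$-vertices matched by $M$ lie in $Y_L$, and condition (a) of Theorem~\ref{thm:structure} forbids edges between $X_S$ and $Y_L$. Combined with parts (c), (d) and (e) of Theorem~\ref{thm:structure}, this shows that the maximum cardinality of an envy-free matching of $G$ is exactly $|X_L|$, and that the maximum-cardinality envy-free matchings of $G$ are precisely the $X_L$-saturating matchings of $G[X_L,Y_L]$.

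For part (a): first run the algorithm of Theorem~\ref{thm:structure} to obtain the partition, hence the subgraph $G[X_L,Y_L]$, in $O(m\root{\nmin})$ time; then compute a maximum-cardinality matching in $G[X_L,Y_L]$ with the Hopcroft--Karp algorithm. Since $G[X_L,Y_L]$ is $X$-saturated we have $|X_L|\le|Y_L|\le\nmin$, so this step also costs $O(m\root{\nmin})$, and the matching it returns saturates $X_L$; by the observation above it is therefore an envy-free matching of $G$ of the largest possible size. The total running time is $O(m\root{\nmin})$.

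For part (b): again obtain $G[X_L,Y_L]$ in $O(m\root{\nmin})$ time; by the observation it now suffices to compute an $X_L$-saturating matching of $G[X_L,Y_L]$ of minimum total weight, i.e.\ a minimum-weight matching of cardinality $|X_L|$. I would solve this by the successive-shortest-path (Hungarian) method on the residual network: maintain reduced costs via vertex potentials --- these are legal from the start because $w\geq 0$, so the first shortest-path search is a plain Dijkstra run --- and perform $|X_L|\le\nmin$ augmentations, each a Dijkstra computation costing $O(m+\nmin\log\nmin)$ with Fibonacci heaps. This gives $O(\nmin(m+\nmin\log\nmin)) = O(m\nmin+{\nmin}^2\log\nmin)$, which absorbs the $O(m\root{\nmin})$ preprocessing.

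The only non-routine step is the correctness bridge in the first paragraph --- that ``maximum-cardinality envy-free matching of $G$'' coincides with ``$X_L$-saturating matching of $G[X_L,Y_L]$'' --- since everything else is a textbook invocation of Hopcroft--Karp and of the weighted bipartite matching algorithm. The rest is bookkeeping: checking that the preprocessing from Theorem~\ref{thm:structure} fits inside the stated time bounds (it does, being $O(m\root{\nmin})$, which is dominated by both targets), and handling the degenerate case $X_L=\emptyset$, in which the empty matching is the unique envy-free matching and is returned directly.
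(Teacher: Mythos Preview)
Your proof is correct and follows essentially the same approach as the paper: reduce to $G[X_L,Y_L]$ via Theorem~\ref{thm:structure}, identify maximum-cardinality envy-free matchings with $X_L$-saturating matchings there, and invoke Hopcroft--Karp for (a) and the unbalanced Hungarian method for (b). Two small remarks: for part (a) the paper avoids your second Hopcroft--Karp call by noting that the maximum matching $M$ already computed inside Theorem~\ref{thm:structure} restricts to an $X_L$-saturating matching $M[X_L,Y_L]$, which can be returned directly; and your inequality $|Y_L|\le\nmin$ is false in general (consider $|Y|>|X|$), though this is harmless since your runtime bound only needs $|X_L|\le\nmin$, which does hold because $|X_L|\le|Y_L|\le|Y|$ and $|X_L|\le|X|$.
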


The algorithms are presented in Section \ref{sec:cardinality}.

\subsection{Envy-free matching in fair division}

A \emph{fair division problem} is a problem of allocating resources among people with different preferences, such that each person conceives his or her share as ``fair'' according to a given fairness criterion.
 
The algorithms of Theorem \ref{thm:algos} directly solve two variants of a problem known as \emph{fair house assignment}. In this problem, the resources are indivisible, each
agent must get at most a single resource,  and the fairness criterion is envy-freeness.
Part (a) solves a variant in which the goal is to maximise the number of agents assigned to a house that they like, subject to envy-freeness. 
Part (b) solves a variant in which each assignment of an agent to a house has a certain cost for society (e.g. the cost of building the house or of moving the agent to the house), and the goal is to minimise the total cost of the assignment, subject to envy-freeness and maximising the number of assigned agents.
Both parts solve variants in which it is allowed to leave some houses unallocated.

Interestingly, the same algorithms, combined with Corollary \ref{thm:sufficient},
can be used as subroutines in algorithms for various other fair division problems, both of divisible and of indivisible resources, in which \emph{all} resources must be allocated. 
Each of these problems 
requires its own notation and definitions, which are presented formally in Sections \ref{sec:app}
and \ref{sec:app-objects}. Our results are presented informally below.

For a \emph{divisible} resource (``cake''), we 
focus on a fairness criterion called \emph{proportionality}, which means that each agent must get a piece that he/she values at least a fraction $1/n$ of the total cake value \citep{Steinhaus1948Problem}. 
There are various algorithms for proportional cake division,
but most of them are not \emph{symmetric} --- the same agent might get a different value when playing first vs. playing second. 
This may lead to quarrels regarding who should play first. 
\citet{cheze2018don} presented a deterministic symmetric algorithm for proportional cake division, with an exponential run-time. He asked whether a polynomial-time algorithm exists. The following theorem answers his question; it is proved in Section \ref{sec:app}.
\begin{theorem}
\label{thm:symmetric}
There is a \emph{deterministic}, \emph{symmetric} and \emph{polynomial-time} algorithm that entirely allocates a divisible resource (``cake'') among $n$ agents such that the value of each agent is at least $1/n$ of the total cake value.
\end{theorem}

For \emph{indivisible} objects, we focus on a fairness criterion called \emph{1-out-of-$k$ maximin-share}, which means that each agent weakly prefers his or her allocated bundle over the outcome of partitioning the objects into $k$ subsets and getting the worst subset. 
\citet{procaccia2014fair} proved that, when the objects are \emph{goods} (i.e., each agent values each object at least 0),  a 1-out-of-$n$ maximin-share allocation may not exist for $n\geq 3$ agents. They asked whether a 1-out-of-$(n+1)$ maximin-share allocation exists. 
The following theorem makes a step towards an answer; it is proved in Section \ref{sec:app-objects}.
\begin{theorem}
\label{thm:mms-goods}
Given a set of indivisible goods, and $n$ agents with additive valuations, there is a protocol that partitions \emph{all} the goods among the agents, such that the value of each agent is at least the agent's 1-out-of-$(2n-2)$ maximin-share.
\end{theorem}

The same algorithm can be used when the objects are \emph{bads} (i.e., each agent values each object at most 0; such objects are also known as \emph{chores}).
\begin{theorem}
\label{thm:mms-bads}
Given a set of indivisible bads, and $n$ agents with additive valuations, there is a protocol that partitions \emph{all} the bads among the agents, such that the value of each agent is at least the agent's 1-out-of-$\lfloor 2n/3 \rfloor$ maximin-share.
\end{theorem}

Moreover, the same algorithm allows each agent to choose between other related fairness criteria, namely 
$(\ell -1)$-out-of-$(\ell n-2)$
maximin-share for any $\ell\geq 2$, or $2/3$-fraction maximin-share (see Appendix \ref{sec:mms-variants}).
The main contribution of this paper thus lies not in solving a specific fair division problem, but rather in presenting a tool --- envy-free matching --- that can be applied as a subroutine in various kinds of fair division problems.

Some extensions of the basic model and some open questions are presented in Section \ref{sec:future}.

Concepts similar to envy-free matching appeared in previous papers related to fair division, but they were hidden inside proofs of more specific algorithms \citep{Kuhn1967Games,procaccia2014fair,amanatidis2017approximation,ghodsi2018fair,bogomolnaia2020guarantees}. Appendix \ref{sec:related} presents a detailed comparison.
Presenting envy-free matching as a stand-alone graph-theoretic concept allows us to both simplify old algorithms and design new ones. 

\section{Bipartite graph structure and envy-free matchings}
\label{sec:pre}
This section proves Theorem \ref{thm:structure} and its corollaries.
%regarding the structure of bipartite graphs. 
The main technical tool used is the \emph{alternating sequence}.

\subsection{Alternating sequences}

\begin{figure}
\begin{center}
\includegraphics[height=4cm,width=7cm]{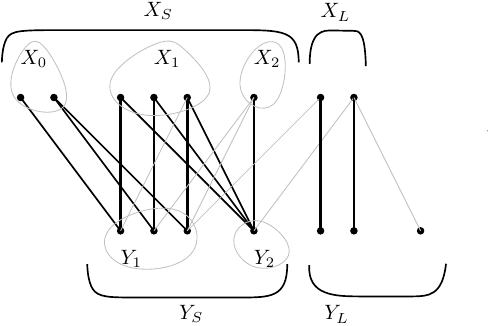}
%\hskip 1.5cm
%\includegraphics[height=3cm,width=6cm]{alternating-sequence-y}
\end{center}
\caption{
\label{fig:alternating-sequence}
A maximal $M$-alternating sequence 
$\cS(M,X_0)$ 
and the induced partitions. 
The heavy vertical edges are edges of the underlying matching $M$.
The heavy diagonal edges are 
edges of 
$E\sm M$ 
used in constructing  the alternating sequence.
The light diagonal (gray) edges are
edges of 
$E\sm M$ 
not used in the construction.
}
\end{figure}
\begin{definition}
Let $M$ be a matching in a bipartite graph $G:=(X \discup Y, E)$.
Let $X_0\subset X$ be \rev{the} subset of vertices unmatched by $M$.  An \emph{$M$-alternating sequence starting at $X_0$} is a sequence of pairwise-disjoint subsets of vertices
$
X_0 - Y_1 - X_1 - Y_2 - X_2 - \cdots
$
where for all $i\geq 1$:%
\footnote{
The $M$-alternating sequence is closely related to the \emph{$M$-alternating path} --- a sequence of vertices $x_0 - y_1 - x_1 - ...$ where each even edge is in $M$ and each odd edge is not in $M$ (or vice versa).
The difference is that the elements in an $M$-alternating sequence are \emph{subsets} of vertices rather than single vertices.
}
\begin{itemize}
\item $Y_{i} = N_{G\sm M} (X_{i-1}) ~ \sm ~ (\cup_{j<i}Y_j)$;
\footnote{
Here $G\sm M$ denotes the graph $G$ with the edges of $M$ removed.
}
\item $X_{i} = N_{M} (Y_i)$.
\end{itemize}
\end{definition}

Given $M$ and $X_0$, it is simple to construct an $M$-alternating sequence starting at $X_0$. Since the graph is finite, this construction eventually yields an empty subset --- either $X_i$ or $Y_i$ for some $i$. Denote by $\cS(M,X_0)$ the maximal $M$-alternating sequence starting at $X_0$ and ending before the first $\emptyset$.
This $\cS(M,X_0)$ induces a partition of the graph as follows (See Figure \ref{fig:alternating-sequence}):
\begin{itemize}
\item $X = X_S \discup X_L$, where $X_S := \cup_{i\geq 0} X_i$ = the vertices of $X$ participating in the sequence, and $X_L := X\sm X_S$ = the Leftover vertices. 
\item $Y = Y_S \discup Y_L$, where $Y_S := \cup_{i\geq 1} Y_i$ and $Y_L := Y\sm Y_S$.
\end{itemize}

\subsection{Alternating sequences and maximum-cardinality matchings}

When $M$ has maximum cardinality, its alternating sequences have  useful properties.%
\footnote{
The vertices of $X_L$ are exactly the vertices of $X$ that are unreachable from $X_0$ in $M$-alternating paths.
Thus $X_L$ is reminiscent of the ``unreachable'' set in the Dulmage-Mendelsohn decomposition \citep{irving2006rank,pulleyblank1995matchings}.
However, the reachability in the Dulmage-Mendelsohn decomposition is from the set of \emph{all} unsaturated vertices, while the reachability in our case is only from $X_0$ --- the set of unsaturated vertices in $X$. 
}

\begin{lemma}
\label{lem:alternating-sequence}
Let $M$ be a maximum-cardinality matching in $G=(X\discup Y,E)$ and $X_0 := X\sm X_M = $ the subset of $X$ unsaturated by $M$.
Consider the partitions $X = X_S \discup X_L$ and $Y = Y_S \discup Y_L$ induced by the maximal alternating sequence $\cS(M,X_0)$. Then:

(a) There are no edges between $X_S$ and $Y_L$;

(b) The subgraph $G[X_S,Y_S]$ is $Y$-path-saturated;

(c) The subgraph $G[X_L,Y_L]$
is $X$-saturated.
\end{lemma}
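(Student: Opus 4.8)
My plan is to exploit the maximality of $M$ --- equivalently, the nonexistence of $M$-augmenting paths --- and then read off (a), (b), (c) by direct inspection of the maximal alternating sequence, which I will write as $\cS(M,X_0)=X_0-Y_1-X_1-\cdots-Y_k-X_k$.

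\emph{Step 1 (the key observation).} First I would show that every vertex of $Y_S$ is $M$-saturated. Suppose for contradiction that some $y\in Y_i$ with $i\geq 1$ is unsaturated. Unwinding the defining relations $Y_j=N_{G\sm M}(X_{j-1})\sm(\cup_{\ell<j}Y_\ell)$ and $X_j=N_M(Y_j)$, a short downward induction produces vertices $x_{i-1}\in X_{i-1},\,y_{i-1}\in Y_{i-1},\,x_{i-2}\in X_{i-2},\dots,\,y_1\in Y_1,\,x_0\in X_0$ with $(x_{i-1},y),(x_{i-2},y_{i-1}),\dots,(x_0,y_1)\in G\sm M$ and $(y_{i-1},x_{i-1}),(y_{i-2},x_{i-2}),\dots\in M$; reading these in order gives an $M$-alternating path from $x_0$ to $y$ whose first and last edges lie outside $M$. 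Since $x_0$ lies in $X_0$ and is therefore unsaturated, $y$ is unsaturated by assumption, and the sets $X_0,Y_1,X_1,\dots$ are pairwise disjoint so the path is simple, this is an $M$-augmenting path, contradicting maximality of $M$. Two consequences then come for free: (i) whenever $Y_i\neq\emptyset$ we have $X_i=N_M(Y_i)\neq\emptyset$ and $M$ restricts to a perfect matching between $X_i$ and $Y_i$, so the construction must terminate with an empty $Y_{k+1}$ rather than an empty $X_i$; hence $\cS(M,X_0)=X_0-Y_1-\cdots-Y_k-X_k$ for some $k\geq 0$ with $N_{G\sm M}(X_k)\subseteq Y_S$; and (ii) since $X_0\subseteq X_S$, every vertex of $X_L$ is $M$-saturated.

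\emph{Step 2 (deducing (a)--(c)).} For (a): given $x\in X_i\subseteq X_S$ and an edge $(x,y)\in E$, if $(x,y)\in M$ then $i\geq 1$ (vertices of $X_0$ are unsaturated) and $x\in N_M(Y_i)$ forces $y\in Y_i\subseteq Y_S$; if $(x,y)\notin M$ then $y\in N_{G\sm M}(X_i)$, which the defining formula for $Y_{i+1}$ (when $i<k$) or Step 1(i) (when $i=k$) places inside $Y_S$; either way $y\notin Y_L$. For (b): I use the partitions $X_S=X_0\discup\cdots\discup X_k$ and $Y_S=Y_1\discup\cdots\discup Y_k$; if $k=0$ then $Y_S=\emptyset$ and $G[X_S,Y_S]$ is $Y$-path-saturated by the degenerate convention noted after Definition~\ref{def:snake}, and if $k\geq 1$ then for each $1\leq i\leq k$ the matching $M$ restricts to a perfect matching between $X_i$ and $Y_i$ (Step 1(i)) and every $y\in Y_i\subseteq N_{G\sm M}(X_{i-1})$ has a neighbour in $X_{i-1}$ --- exactly the two requirements of Definition~\ref{def:snake}, with all the relevant edges lying inside $G[X_S,Y_S]$. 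For (c): let $M':=\{(x,y)\in M:x\in X_L\}$; every $x\in X_L$ is $M$-saturated (Step 1(ii)), and if its $M$-partner $y$ lay in some $Y_i\subseteq Y_S$ then $x\in N_M(Y_i)=X_i\subseteq X_S$, a contradiction, so $y\in Y_L$; thus $M'$ is a matching of $G[X_L,Y_L]$ saturating $X_L$, i.e.\ $G[X_L,Y_L]$ is $X$-saturated.

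\emph{Main obstacle.} The only place requiring genuine care is the augmenting-path argument in Step 1: one must check that the backward unwinding yields an honest alternating path of the correct parity (odd edges outside $M$, even edges inside $M$, both endpoints unsaturated), and it is precisely the pairwise-disjointness built into the definition of an alternating sequence that makes this path simple. Everything afterwards is definition-chasing, with the remaining care going to the degenerate cases $X_0=\emptyset$ and $k=0$, where $X_S$ (a set of isolated vertices of $X$) and $Y_S$ collapse and one falls back on the convention that such graphs are $Y$-path-saturated.
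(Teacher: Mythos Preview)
Your proposal is correct and follows essentially the same approach as the paper: both hinge on the augmenting-path argument showing that every vertex of $Y_S$ is $M$-saturated, and then read off (a), (b), (c) from the structure of the alternating sequence. Your treatment of (a) is in fact more explicit than the paper's---the paper simply asserts ``by construction, $Y_S$ is exactly the set of neighbours of $X_S$,'' whereas you carefully split into the $M$-edge and non-$M$-edge cases and handle the terminal index $i=k$ separately---but this is a difference of exposition, not of strategy.
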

\begin{proof}
\textbf{Part (a).} By construction, the set $Y_S$ is exactly the set of  neighbors of $X_S$ in $G$.

\textbf{Part (b).} We first prove that $M[X_S,Y_S]$ --- the subset of $M$ contained in $G[X_S,Y_S]$ --- saturates $Y_S$.
Indeed, if, for some $i\geq 1$, some vertex $y_i\in Y_i$ were unmatched by $M$, then an $M$-alternating path could be traced along the edges used in the construction of $\cS(M,X_0)$, namely: $y_i - X_{i-1} - Y_{i-1} - \cdots - X_0$, where both end vertices are unmatched. 
By ``inverting'' the path, one could increase the size of the matching by one, but this contradicts the maximality of $M$. 
Hence, all vertices of $\cup_{i\geq 1} Y_i = Y_S$ are matched by $M$.
By construction, the set of their matches in $M$ is $\cup_{i\geq 1} X_i\subseteq X_S$.

This implies that the construction of $\cS(M,X_0)$ ends at the $X$ side, i.e., it ends at $X_k$ for some $k\geq 0$.
Now, the partitions $X_S = X_0 \discup \cdots \discup X_k$ and $Y_S = Y_1 \discup \cdots \discup Y_k$
satisfy the definition of a $Y$-path-saturated graph
(Definition \ref{def:snake}): for every $i\geq 1$, every vertex in $Y_i$ is adjacent to some vertex in $X_{i-1}$, 
and there is a perfect matching between $X_i$ and $Y_i$ (along edges of $M$).

\textbf{Part (c).}
We prove that $M[X_L,Y_L]$ --- the subset of $M$ contained in $G[X_L,Y_L]$ --- saturates $X_L$.
Indeed, by the lemma assumption, all vertices of $X$ unmatched by $M$ are contained in $X_0\subseteq X_S$, so all vertices of $X_L$ are matched by $M$. By construction, they must be matched to vertices not in any $Y_i$, so their matches must all lie inside $Y_L$.
\end{proof}

Note that, in the special case in which $G$ is $X$-saturated, the maximum matching $M$ saturates $X$, so $X_0$ is empty and the $M$-alternating sequence is empty. In this case, $X_L = X$ and $Y_L = Y$.
In the other extreme case, in which $G$ is an odd path, $X_0$ always contains a single vertex which is one of the two endpoints, and the $M$-alternating sequence spans the entire graph, so $X_S = X$ and $Y_S = Y$.

\subsection{Alternating sequences and envy-free matchings}

The following lemma relates the three properties (a),(b),(c) above to envy-free matchings.

\begin{lemma}
\label{lem:enclosure-efm}
Let $G = (X\discup Y, E)$, and consider any partitions
$X = X_S \discup X_L$ and $Y = Y_S \discup Y_L$ 
satisfying properties (a), (b) and (c) of Lemma \ref{lem:alternating-sequence}.
Then:

(d) Every \xlsating{} matching in $G[X_L,Y_L]$ is an envy-free matching in $G$.

(e) Every envy-free matching in $G$ is contained in $G[X_L,Y_L]$.
\end{lemma}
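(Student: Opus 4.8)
The plan is to derive (d) almost immediately from conditions (a) and (c), and to derive (e) by combining (a) with the $Y$-path-saturated structure from (b) through a Hall-type deficiency bound. \textbf{Part (d)} is quick: by (c) there is a matching $M_L$ in $G[X_L,Y_L]$ saturating $X_L$; every vertex of $X$ left unsaturated by $M_L$ lies in $X\sm X_L=X_S$, every vertex of $Y$ saturated by $M_L$ lies in $Y_L$, and by (a) there is no edge between $X_S$ and $Y_L$, so no unsaturated vertex of $X$ is adjacent to a saturated vertex of $Y$, i.e.\ $M_L$ is envy-free in $G$.

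For \textbf{(e)} the key auxiliary fact I would isolate and prove first is: \emph{a $Y$-path-saturated graph $H=(X\discup Y,E)$ admits no nonempty envy-free matching w.r.t.\ $X$}. I would prove the stronger deficiency statement that $|N_H(S)|>|S|$ for every nonempty $S\subseteq Y$. Using the decomposition $X=X_0\discup\cdots\discup X_k$, $Y=Y_1\discup\cdots\discup Y_k$ of Definition \ref{def:snake}, set $S_i:=S\cap Y_i$ and $J:=\{i: S_i\neq\emptyset\}$. The perfect matching between $X_i$ and $Y_i$ embeds $S_i$ into $N_H(S)\cap X_i$, so $\sum_{i\in J}|N_H(S)\cap X_i|\geq|S|$; moreover, for each $i\in J$ some vertex of $S_i\subseteq Y_i$ has a neighbor in $X_{i-1}$, so $N_H(S)$ also meets $X_{i-1}$. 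Since $J\subseteq\{1,\dots,k\}$ we have $0\notin J$, so $J$ is not closed under $i\mapsto i-1$; taking $i=\min J$ gives $i-1\notin J$ with $N_H(S)\cap X_{i-1}\neq\emptyset$, contributing a vertex not counted in the previous sum. Hence $|N_H(S)|\geq|S|+1$. The auxiliary fact follows: if $N$ were a nonempty envy-free matching of $H$, envy-freeness would force $N_H(Y_N)=X_N$, so $S:=Y_N$ (nonempty) would satisfy $|N_H(S)|=|X_N|=|Y_N|=|S|$, a contradiction.

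Granting this, I would finish (e) as follows. Let $M$ be any envy-free matching of $G$ and put $M':=M\cap E(G[X_S,Y_S])$. First, $M'$ is envy-free in $G[X_S,Y_S]$ w.r.t.\ $X_S$: if $x\in X_S$ is unsaturated by $M'$ then, since (a) forbids edges from $x$ into $Y_L$, $x$ is in fact unsaturated by $M$, whence envy-freeness of $M$ in $G$ keeps $x$ non-adjacent to every vertex of $Y_M$, in particular to every vertex of $Y_S$ saturated by $M'$. By (b) and the auxiliary fact, $M'=\emptyset$; combined with (a) this shows $M$ saturates no vertex of $X_S$, i.e.\ $X_M\subseteq X_L$. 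Finally, if $M$ saturated some $y\in Y_S$, its partner would lie in $X_L$ (as $X_M\subseteq X_L$); but $y$ lies in some block $Y_i$ with $i\geq1$, hence has a neighbor $x'\in X_{i-1}\subseteq X_S$, and $x'$ is unsaturated by $M$ yet adjacent to the $M$-saturated vertex $y$ --- contradicting envy-freeness. So $Y_M\subseteq Y_L$ as well, giving $M\subseteq E(G[X_L,Y_L])$.

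The hardest part is the auxiliary deficiency bound, and within it the upgrade from the trivial $|N_H(S)|\geq|S|$ to the strict inequality: a neighbor of $S_i$ in $X_{i-1}$ might coincide with the image of $S_{i-1}$ under the block matching, so the surplus vertex has to be pinned down by a combinatorial argument on $J$ --- it is exactly the block $X_{\min J-1}$, which lies outside $J$ precisely because $J$ avoids $0$.
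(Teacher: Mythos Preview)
Your proof is correct and follows essentially the same approach as the paper: both arguments hinge on the observation that the perfect block matchings account for $|Y_{SW}|$ forced neighbors in $X_S$, while the backward edge from $Y_{\min J}$ into $X_{\min J - 1}$ contributes one more, overrunning the number of available partners in $Y_S$. The only difference is packaging: you isolate this as a reusable Hall-type deficiency bound $|N_H(S)|>|S|$ for nonempty $S\subseteq Y$ in any $Y$-path-saturated graph $H$, then reduce to it by restricting $M$ to $G[X_S,Y_S]$, whereas the paper runs the same counting argument directly inside $G$ without stating the auxiliary lemma separately.
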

Note that Lemma \ref{lem:enclosure-efm} does not refer to a particular maximum matching --- it holds for any partitions of $X$ and $Y$ that satisfy the properties (a),(b),(c) above.

\begin{proof}[Proof of Lemma \ref{lem:enclosure-efm}]
~

\textbf{Part (d).}
Let $W$ be an \xlsating{} matching in $G[X_L,Y_L]$. Since $W$ saturates $X_L$, no vertex of $X_L$ is envious. 
By property (a), there are no edges between $X_S$ and $Y_L$. Since only vertices of $Y_L$ are saturated by $W$, no vertex of $X_S$ is envious. Hence, no vertex of $X$ is envious, so $W$ is an envy-free matching in $G$.

\textbf{Part (e).}
Let $W$ be any envy-free matching in $G$.
Let $X_{SW} := $ the subset of $X_S$ saturated by $W$ and $Y_{SW} := $ the subset of $Y_S$ saturated by $W$. We have to prove that both $X_{SW}$ and $Y_{SW}$ are empty.
By property (a), vertices of $X_{SW}$ can only be matched to vertices of $Y_{SW}$, so it is sufficient to prove that $Y_{SW}$ is empty.
~~
The proof is by a counting argument. Let $k_{SW} := |Y_{SW}|$ and assume by contradiction that $k_{SW} > 0$.

By property (b), the graph $G[X_S,Y_S]$ is $Y$-path-saturated; denote the partitions appearing in Definition \ref{def:snake} by
$X_S = X_0 \discup \cdots \discup X_k$ and $Y_S = Y_1 \discup \cdots \discup Y_k$.
Let $i\geq 1$ be the smallest index such that a vertex of $Y_i$ is matched by $W$, so that $Y_{SW} \subseteq \cup_{j\geq i} Y_j$.
By Definition \ref{def:snake}, all vertices of $Y_{SW}$ are perfectly matched to vertices of $\cup_{j\geq i} X_j$;
denote their matches by $X_{SW}'$. Note that 
$|X_{SW}'| = k_{SW}$.
Every vertex $x\in X_{SW}'$ is adjacent (along an edge of the perfect matching) to a vertex of $Y_{SW}$, which is saturated by $W$. To ensure that $x$ is not envious, $x$ must be saturated by $W$ too. 

Let $y'$ be a vertex in $Y_i\cap Y_{SW}$. By Definition \ref{def:snake}, it is adjacent to some vertex $x'\in X_{i-1}$. 
To ensure that $x'$ is not envious, $x'$ must be saturated by $W$ too. But $x'\not\in X_{SW}'$ since $X_{SW}'\subseteq \cup_{j\geq i} X_j$.
Hence, there must be at least $k_{SW}+1$ vertices of $X_S$ that are saturated by $W$: the $k_{SW}$ vertices of $X_{SW}'$, plus the vertex $x'$ which is not in $X_{SW}'$. 
But this is a contradiction, since vertices of $X_S$ can be matched only to vertices of $Y_S$, and only $k_{SW}$ vertices of $Y_S$ are matched by $W$.
\end{proof}

We now have all the ingredients required to prove Theorem \ref{thm:structure}, which we restate below.

\subsection{The EFM partition}

\begin{theorem*}[\ref{thm:structure}]
\theoremstructure
\end{theorem*}

\begin{proof}[Proof of Theorem \ref{thm:structure}]
Let $G = (X \discup Y, E)$ be a bipartite graph, $M$ an arbitrary maximum matching in $G$, 
and $X_S \discup X_L$ and $Y = Y_S \discup Y_L$ the partitions induced by its maximal alternating sequence.

Lemma \ref{lem:alternating-sequence} shows that these partitions satisfy properties (a), (b) and (c). 
Lemma \ref{lem:enclosure-efm} then shows that parts (d) and (e) are satisfied too.

It remains to prove that the partitions are unique, that is, do not depend on the selection of the maximum matching $M$.

Consider alternative partitions $X = X_S' \discup X_L'$ and $Y = Y_S'\discup Y_L'$ satisfying properties (a), (b) and (c).
Applying Lemma \ref{lem:enclosure-efm}(d) to the partitions $X_S\discup X_L$, $Y_S\discup Y_L$ implies that there is an envy-free matching in $G$ saturating $X_L$.
Applying Lemma \ref{lem:enclosure-efm}(e) to the partition $X_S' \discup X_L'$, $Y_S'\discup Y_L'$ implies that this matching must be contained in $G[X_L',Y_L']$; in particular, $X_L\subseteq X_L'$.
Analogous arguments imply that $X_L'\subseteq X_L$. Hence $X_L' = X_L$.
Hence also $X_S' = X_S$.
~~
Since $Y_S = N_G(X_S)$
and $Y_S' = N_G(X_S')$, we also have $Y_S' = Y_S$.
Hence also $Y_L' = Y_L$.
\end{proof}

Note how the concept of envy-free matching helped us prove the uniqueness of the partition, which is a general fact about bipartite graphs.

\subsection{Conditions for existence of envy-free matchings}

We now prove Corollary \ref{thm:sufficient}. It is simpler to prove in the following ``reverse'' formulation.
\begin{corollary}[$\equiv$ Corollary \ref{thm:sufficient}]
A bipartite graph $G:=(X \discup Y, E)$ admits only an empty envy-free matching ---

(a) if and only if the bipartite graph $(X\discup N_G(X), E)$ is $Y$-path-saturated;

(b) only if $|N_G(X)| < |X|$ or $|X| = 0$;

(c) if and only if $|N_G(Y')| > |Y'|$ for all non-empty subsets $Y'\subseteq N_G(X)$.
\end{corollary}

\begin{proof}
~\\
\textbf{Part (a).}
Consider the unique partitions $X = X_S\discup X_L$ and $Y = Y_S\discup Y_L$ that exist by Theorem \ref{thm:structure}.
Parts (d,e) of this theorem imply that $G$ admits only an empty envy-free matching iff $X_L$ is empty.
Hence it is sufficient to show that 
the graph
$(X\discup N_G(X), E)$ is $Y$-path-saturated iff
$X_L$ is empty.

If $X_L$ is empty, then $X = X_S$ and $N_G(X) = Y_S$, so the graph $(X\discup N_G(X), E)$ is $Y$-path-saturated.

Conversely, suppose $(X\discup N_G(X), E)$ is $Y$-path-saturated, and define $X_S' := X$ and $X_L' := \emptyset$ and $Y_S' := N_G(X)$ and $Y_L' := Y\setminus N_G(X)$.
Then, the partitions $X = X_S' \discup X_L'$
and $Y = Y_S' \discup Y_L'$ satisfy all three properties (a,b,c) of Theorem \ref{thm:structure}: there are no edges between $X_S'$ and $Y_L'$; the subgraph $G[X_S',Y_S']$ is $Y$-path-saturated by assumption; and the subgraph $G[X_L',Y_L']$ is vacuously $X$-saturated.
Now, the uniqueness of the partition implies that $X_L = X_L' = \emptyset$.

The next two parts follow from part (a):

\textbf{Part (b).} Every $Y$-path-saturated graph is either empty, or its $Y$ side is smaller than its $X$ side. 

\textbf{Part (c).} 
In every $Y$-path-saturated graph, every non-empty subset in the $Y$ side is contained in $\cup_{j\geq i}Y_j$ for some $i\geq 1$. It is perfectly matched to some subset of 
$\cup_{j\geq i}X_j$, and moreover, the vertices of $Y_i$ are adjacent to some vertices of $X_{i-1}$; therefore $|N_G(Y')|>|Y'|$.

Conversely, if $|N_G(Y')|>|Y'|$ for all non-empty $Y'\subseteq N_G(X)$, 
then every non-empty $Y'$ that is perfectly matched to some subset $X'\subseteq X$, must be adjacent to some vertices in $X\setminus X'$.
This implies that, in the unique EFM partition, $Y_L$ must be empty --- since otherwise it would have to be adjacent to some vertices in $X_S$, in contradiction to property (a).
\end{proof}

\begin{corollary}[$\equiv$ Corollary \ref{thm:sufficient2}]
Let $G:=(X \discup Y, E)$ be a bipartite graph with $|N_G(X)|\geq |X|\geq 1$.
If, for some integer $k\geq 1$,
every vertex in $N_G(X)$ has at least $k$ neighbors,
then $G$ admits an envy-free matching of cardinality at least $k$.
\end{corollary}
\begin{proof}
Corollary \ref{thm:sufficient}(b) implies that $G$ admits a non-empty envy-free matching.
Any matched vertex in $Y$ has at least $k$ neighbors. By envy-freeness, all these neighbors must be matched.
\end{proof}

\section{Algorithms for finding envy-free matchings}
\label{sec:cardinality}
This section 
applies the structural results of the previous section to prove Theorem \ref{thm:algos}.

The proof uses a simple algorithm (Algorithm \ref{alg:efm-partition}) for finding the unique EFM partition of a bipartite graph $G$.
The algorithm first finds an arbitrary maximum-cardinality matching in $G$; this can be done using the classic algorithm of \citet{HK73}.
\citet{RT12} show that this algorithm runs within $O(m \root{\nmin})$ time.

Then, the algorithm finds the set $X_0$ of unmatched vertices and the maximal alternating sequence $\cS(M,X_0)$.
The sets $X_S,Y_S$ are just the union of the subsets of $X,Y$ participating in the sequence, and the sets $X_L,Y_L$ are the \rev{remaining} vertices in $X,Y$. 
These can all be found in time \rev{linear in $n+m$,
since the algorithm entails to examine every vertex and scan its adjacent edges a constant number of times.} Therefore, the total run-time of Algorithm \ref{alg:efm-partition} is  $O(m \root{\nmin})$.

\subsection{Maximum cardinality envy-free matching}
\begin{proof}[Proof of Theorem \ref{thm:algos}(a)]
The theorem claims that, given a bipartite graph $G$, an envy-free matching of maximum cardinality can be found within $O(m\root{\nmin})$ time. 

This can be done simply by the following algorithm:
\begin{enumerate}
\item Find the EFM partition of $G$ using Algorithm \ref{alg:efm-partition}.
\item Return an arbitrary maximum-cardinality matching in $G[X_L,Y_L]$.
\end{enumerate}
By Theorem \ref{thm:structure}(a), the returned matching saturates $X_L$; by Theorem \ref{thm:structure}(d), it is envy-free; by Theorem \ref{thm:structure}(e), no other envy-free matching can saturate any vertex of $X_S$. Therefore, the returned matching is indeed a maximum-cardinality envy-free matching. 
\end{proof}

To save time, instead of returning an arbitrary maximum-cardinality matching in $G[X_L,Y_L]$, we can re-use the maximum-cardinality matching $M$ which is needed for computing the EFM partition, and return its subset $M[X_L,Y_L]$ --- the set of edges of the matching $M$ that link vertices of $X_L$ to vertices of $Y_L$; see Algorithm \ref{alg:cardinality}.
By Lemmas \ref{lem:alternating-sequence} and
\ref{lem:enclosure-efm}, 
this subset is an envy-free matching and it saturates $X_L$.

\begin{algorithm}[t]
\caption{Finding the EFM partition of a bipartite graph}
\label{alg:efm-partition}
\begin{algorithmic}[1]
\REQUIRE A bipartite graph $G:= (X \discup Y,E)$.  
\ENSURE The unique EFM partition of $G$ to $X_S\discup X_L = X$ and $Y_S\discup Y_L = Y$.
\STATE Find a maximum-cardinality matching $M$ in $G$.
\STATE Let $X_0 := X\sm X_M = $ the vertices of $X$ unmatched by $M$.
\STATE Compute the maximal $M$-alternating sequence $\cS(M,X_0)$.
\STATE Return $X_S := \cup_{i\geq 0} X_i$ and $X_L := X\sm X_S$ and $Y_S := \cup_{i\geq 1} Y_i$ and $Y_L := Y\sm Y_S$.
\end{algorithmic}
\end{algorithm}

\begin{algorithm}[t]
\caption{Finding an envy-free matching of maximum cardinality}
\label{alg:cardinality}
\begin{algorithmic}[1]
\REQUIRE A bipartite graph $G:= (X \discup Y,E)$.  
\ENSURE An envy-free matching of maximum cardinality in $G$. 
\STATE Find a maximum-cardinality matching $M$ in $G$.
\STATE Find the EFM partition $X=X_S\discup X_L$ and $Y=Y_S\discup Y_L$ 
using Algorithm \ref{alg:efm-partition} (steps 2--4). 
\STATE Return the sub-matching $M[X_L,Y_L]$.
\end{algorithmic}
\end{algorithm}

\begin{example}
Consider the bipartite graph in Figure \ref{fig:alternating-sequence}.
Algorithm \ref{alg:efm-partition} finds the EFM partition $X = X_S\discup X_L$ and $Y = Y_S\discup Y_L$ as indicated in the figure. 
Assuming the matching $M$ is denoted by bold vertical edges, Algorithm \ref{alg:cardinality} returns 
the set of two rightmost vertical edges, which is 
an envy-free matching of size $2$.
\end{example}

\subsection{Minimum cost envy-free matching}
\label{sub:minwgt}
\noindent
\begin{proof}[Proof of Theorem \ref{thm:algos}(b)]
The theorem assumes that the graph $G:=(X \discup Y,E)$ is endowed with an edge-cost function $w:E \to \mathbb{R}_{\geq 0}$.
The \emph{cost} of a matching $M\subseteq G$ is defined as $w(M) := \sum_{e\in M} w(e)$.
The theorem claims that 
an envy-free matching of minimum total cost among those of maximum cardinality can be found within 
$O(m\nmin+{\nmin}^2 \log {\nmin})$ time.

The proof uses Algorithm \ref{alg:minwgt}. 
Just like Algorithm \ref{alg:cardinality}, it starts by finding the unique EFM partition of $G$.
The difference is in the last step: instead of returning $M[X_L,Y_L]$, which is an arbitrary $X_L$-saturating matching, 
it returns an $X_L$-saturating matching of minimum cost.

Finding a minimum-cost maximum-cardinality matching is known as the \emph{assignment problem}.
Since $X_L$ and $Y_L$ may be of different sizes, it is an \emph{unbalanced assignment problem}.
\citet{RT12} provide a comprehensive survey of algorithms for the unbalanced assignment problem.
In particular, they show that the famous Hungarian method can be generalised to unbalanced bipartite graphs, and its run-time is $O(m\cdot \nmin + {\nmin}^2 \log{\nmin})$.

By Theorem \ref{thm:structure},
all envy-free matchings in $G$ are contained in the subgraph $G[X_L,Y_L]$, and all maximum-cardinality matchings in $G[X_L,Y_L]$ saturate $X_L$ and are therefore envy-free.
Hence, the Hungarian method on $G[X_L,Y_L]$ yields a maximum-cardinality envy-free matching of minimum cost in $G$.
Since $|X_L|\leq |Y_L|$, the run-time of the Hungarian method is $O(m\cdot |X_L| + {|X_L|}^2 \log{|X_L|})$ which is in 
$O(m\cdot \nmin + {\nmin}^2 \log{\nmin})$.
\end{proof}

\begin{algorithm}[t]
\caption{Finding a minimum-cost maximum-cardinality envy-free matching}
\label{alg:minwgt}
\begin{algorithmic}[1]
\REQUIRE A bipartite graph $G:= (X \discup Y,E)$ and a function $w : E \to \Reals_{\geq 0}$.  
\ENSURE An envy-free matching of minimum cost. 
\STATE Find the EFM partition of $G$ into $X=X_S\discup X_L$ and $Y=Y_S\discup Y_L$ using Algorithm \ref{alg:efm-partition}.
\STATE Find and return a minimum-cost maximum-cardinality matching in $G[X_L,Y_L]$. 
\end{algorithmic}
\end{algorithm}

\begin{example}
Consider again Figure \ref{fig:alternating-sequence}.
The subgraph $G[X_L,Y_L]$ contains three edges; denote them from left to right by $e_1,e_2,e_3$.
This subgraph contains two maximum-cardinality matchings: $\{e_1,e_2\}$ and $\{e_1,e_3\}$.
Both are envy-free matchings in $G$.
Algorithm \ref{alg:minwgt} returns one of them, depending on whether $w(e_1)+w(e_2)$ or $w(e_1)+w(e_3)$ is smaller.
\end{example}

\section{Application to fair division}
\label{sec:app}
\subsection{Generic fair division problem}
We consider first the following generic fair division problem.
\begin{itemize}
\item There is a set $C$, representing a resource that has to be divided among $n$ agents. 
\item For each agent $i \in[n]$ there is a measure (an additive set function) $V_i: 2^C\to \mathbb{R}$, representing the agent's valuation of different parts of the resource.
\item For each agent $i$, there is a threshold value $t_i\in \mathbb{R}$.
%; the vector of threshold values is denoted by $\mathbf{t}$.
\end{itemize}
\rev{A \emph{\tfair{} division} of $C$} is a partition of $C$ into $n$ subsets, $C = Z_1 \discup \cdots \discup Z_n$, such that 
\begin{align*}
\forall i\in[n]: V_i(Z_i)\geq t_i.
\end{align*}
The existence of \rev{a \tfair{} division} depends on the threshold values $t_i$ and on the nature of the resource $C$. For example, if $t_i > V_i(C)/n$ for all $i\in[n]$, then \rev{a \tfair{} division} obviously might not exist (e.g. when the agents' valuations are identical). 
Similarly, if $C$ contains a single (indivisible) object, and the threshold values are positive, then \rev{a \tfair{} division} does not exist. We prove that \rev{a \tfair{} division} exists whenever the threshold values are \rev{''reasonable'', in the sense defined below.}

\begin{definition}[Reasonable threshold]
\label{def:reasonable}
\rev{
Given a resource $C$, a value measure $V_i$ on $C$ and an integer $n\geq 2$, 
a real number $t_i\in \mathbb{R}$ is called \emph{a reasonable threshold for $V_i$} if:
}

(1) There exists a partition of $C$ into $C_1 \discup \cdots \discup C_n$, such that
\begin{align*}
\forall j\in[n]: V_i(C_j)\geq t_i.
\end{align*}
(Informally, $i$ can partition $C$ into $n$ subsets that are acceptable by $i$'s own standards).

(2) For every $k\in\{1,\ldots,n-1\}$, 
and every $k$ disjoint subsets $U_1,\ldots,U_k \subseteq C$, if
\begin{align*}
\forall j\in[k]: V_i(U_j) < t_i,
\end{align*}
then there exists a partition of 
$C\sm \cup_{j\in[k]} U_j$
into 
$C_1 \discup \cdots \discup C_{n-k}$, such that
\begin{align*}
\forall j\in[n-k]: V_i(C_j)\geq t_i.
\end{align*}
(Informally, if any $k$ Unacceptable subsets are given away, then $i$ can partition the remainder into  $n-k$ acceptable subsets).%
\footnote{
Condition (1) is the special case of condition (2) for $k=0$; it is presented as a different condition for the sake of clarity.
}
\end{definition}

\begin{theorem}
\label{thm:lone-divider-general}
\rev{
Consider a resource $C$ and value measures $V_1,\ldots,V_n$ on $C$.
If $t_i$ is a reasonable threshold for $V_i$ for all $i\in[n]$, then a \tfair{} division exists.
}
\end{theorem}
\begin{proof}
The proof is constructive and uses Algorithm \ref{alg:lone-divider-general}, which \rev{generalises} an algorithm of \citet{Kuhn1967Games}. It is described in detail below.

Step 1 asks some arbitrary agent $a$ to partition the resource into $|X|$ pieces that are acceptable by her own standards. In the first iteration, this is possible thanks to condition (1) in the definition of a reasonable threshold; below, we will show that this is possible in the following iterations too.

Step 2 constructs a bipartite graph where each agent is adjacent to all the pieces that are acceptable for him. 

Step 3 finds a maximum-cardinality envy-free matching in this graph. Since $a$ is adjacent to all $|X|$ pieces, $|N_G(X)|\geq |X|$, so by Corollary \ref{thm:sufficient}, a non-empty envy-free matching is found.
Each matched agent $i\in X_M$ receives a piece with a value of at least $t_i$, so the fairness condition is satisfied for these agents.

Step 4 removes the matched agents and pieces, and goes back to step 1 to handle the remaining agents.
Let $k$ be the total number of pieces allocated in all previous iterations, so that $|X\sm X_M| = n-k$.
By the definition of envy-free matching, 
for each unmatched agent $i\in X\sm X_M$, 
the value of each allocated piece is less than $t_i$.
Therefore, by condition (2) in the definition of a reasonable threshold, each remaining agent can partition the remaining resource as required in step 1.

The size of $X$ decreases by at least 1 in each iteration. Therefore, after at most $n$ iterations the algorithm ends with \rev{a \tfair{} division}.
\end{proof}

\begin{algorithm}[t]
\caption{
\label{alg:lone-divider-general}
The Lone Divider algorithm.
Based on \citet{Kuhn1967Games}.
}
\begin{algorithmic}[1]
\REQUIRE ~\\
\begin{itemize}
\item  A set $C$ representing a resource to divide.
\item  A set $X = [n]$ of agents with measures $(V_i)_{i=1}^n$ on $C$.
\item  Threshold values $(t_i)_{i=1}^n$ satisfying Conditions 1 and 2.
\end{itemize}
\ENSURE A partition $C = Z_1\discup\cdots\discup Z_n$ such that $V_i(Z_i)\geq t_i$ for all $i\in [n]$.

\dottedline{}

\STATE  
\label{step:ldg-cut}
Choose an arbitrary agent $a\in X$.
Partition $C$ into $|X|$ disjoint subsets, $(C_j)_{j\in X}$, satisfying $
V_a(C_j)\geq t_a$ for all $j\in X$.
\STATE 
\label{step:ldg-graph}
Define a bipartite graph $G$ with the agents of $X$ on the one side and the set $Y := \{C_1,\ldots,C_{|X|}\}$ on the other side. Add an edge  $(i, C_j)$ whenever $V_i(C_j)\geq t_i$. 
\STATE 
\label{step:ldg-efm}
Using Algorithm \ref{alg:cardinality},
find a maximum-cardinality envy-free matching $M$ in $G$.
Give each matched element in $Y_M$ to the agent paired to it in $X_M$.
\STATE 
\label{step:ldg-recurse}
Let $X \leftarrow X\sm X_M = $ the unallocated agents
and $C \leftarrow C \sm (\cup Y_M) = $ the unallocated resource. 
If $X\neq\emptyset$,
go back to step \ref{step:ldg-cut}.
\end{algorithmic}
\end{algorithm}

\begin{remark}
\rev{
At each iteration, 
the partition in step \ref{step:ldg-cut} requires to ask agent $a$ at most $n$ queries.
Constructing the graph in step \ref{step:ldg-graph} requires to ask each of the other agents at most $n$ queries (one query for each piece).
There are at most $n$ iterations, so each agent is asked $O(n^2)$ queries, and the total number of required queries is $O(n^3)$.
}
\end{remark}

\begin{remark}
The Lone Divider algorithm is not \emph{strategyproof} ---  agent $i$ might gain from reporting a false value measure $V_i$. This holds even when $n=2$, when Lone Divider is equivalent to cut-and-choose. See e.g. \citet{Brams1996Fair} for a discussion of this issue.
In this paper, we ignore these strategic issues and assume that the agents report their valuations truthfully.
\end{remark}

Below we apply Theorem \ref{thm:lone-divider-general}
to some specific fair division problems.

\subsection{Proportional cake-cutting}
\label{sec:app-cake}
A \emph{proportional cake-cutting} problem is a special case of the generic fair division problem, in which ---
\begin{itemize}
\item The resource $C$ is continuous; it is usually called ``cake'' and represented by the real interval $[0,1]$.
\item The value measures $(V_i)_{i \in[n]}$ are nonatomic.
\item For each agent $i$, the threshold value is $t_i = V_i(C)/n$.
\end{itemize}
\rev{These threshold values are reasonable (see Definition \ref{def:reasonable}):}
\begin{description}
\item[Condition (1)] holds thanks to the assumption that the value measures are nonatomic. For each measure $V_i$, the cake can be partitioned into $n$ subsets of equal measure, which is exactly $V_i(C)/n$.
\item[Condition (2)] holds since, if we remove from $C$ any $k$ subsets with a value smaller than $V_i(C)/n$, then the value of the remainder is at least $V_i(C) - \frac{k}{n}V_i(C) = (n-k)\cdot V_i(C)/n$; hence it can be partitioned into $n-k$ subsets of value at least $V_i(C)/n$.
\end{description}

Hence, by Theorem \ref{thm:lone-divider-general}, the Lone Divider algorithm can be used to find a proportional cake-cutting.
In fact, the Lone Divider algorithm was originally stated specifically for the proportional cake-cutting problem.
Steinhaus presented it for $3$ agents. Kuhn \citep{Kuhn1967Games} extended it to an arbitrary number of agents.
The cases $n=3$ and $n=4$ are described in detail by \citet{Brams1996Fair}[pages 31-35], and 
 the general case is described in detail by \citet{Robertson1998CakeCutting}[pages 83-87]. 
Note how the use of envy-free matchings lets us present this algorithm in a much shorter way. 

The Lone Divider algorithm requires \rev{$O(n^3)$} queries. 
Another cake-cutting algorithm, by \citet{Even1984Note}, requires only $O(n \log{n})$ queries. 
However, Lone Divider has other advantages.
One advantage is that it does not assume that all valuations are positive, or even that all valuations have the same sign: it is applicable to
a ``mixed manna'' setting,
in which each part of the cake may be positive to some agents and negative to others \citep{bogomolnaia2020guarantees,avvakumov2020equipartition}.
A second advantage of  Lone Divider is that 
can be modified to be not only fair but also \emph{symmetric}. This is explained in the following subsection.

\subsection{Symmetric algorithm for proportional cake-cutting}
This subsection proves Theorem \ref{thm:symmetric} regarding a symmetric proportional cake-cutting algorithm.
A fair division algorithm is called \emph{symmetric} if the value each agent receives depends only on the valuations of the agents, and not on the order in which the algorithm processes them. In other words, if we run the algorithm, permute the agents, and run the algorithm again, every agent has the same value in both runs.
Most cake-cutting algorithms are not symmetric. 
For example, in Algorithm \ref{alg:lone-divider-general}, the cutter in step 1 (agent $a$) always receives exactly $1/n$ of the total value, while other agents may get more than $1/n$.%
\footnote{
It is assumed here that the agents answer the queries truthfully, based on their real value measure.
}
This might make the agents quarrel over who the cutter will be.
One solution is to select the cutter uniformly at random. 
But is there a symmetric \emph{deterministic} algorithm?

\citet{Manabe2010MetaEnvyFree} presented deterministic symmetric algorithms for two and three agents. 
The case $n\geq 4$ remained open until
\citet{cheze2018don} presented a deterministic symmetric algorithm for any number of agents. 
Ch{\`e}ze mentions that the number of arithmetic operations required by his algorithm may be exponential in $n$, and asks whether there exists a deterministic symmetric algorithm in which 
the number of arithmetic operations required is polynomial in $n$. 
We answer his question in the affirmative by combining his algorithm with our Algorithm \ref{alg:minwgt} for minimum-cost envy-free matching.
The combined algorithm is shown as Algorithm 
\ref{alg:symmetric-proportional}. 
The general scheme is similar to the Lone Divider method, but there are several important changes, which are explained below.

\begin{figure}
\begin{center}
\includegraphics[height=3.5cm,width=9cm]{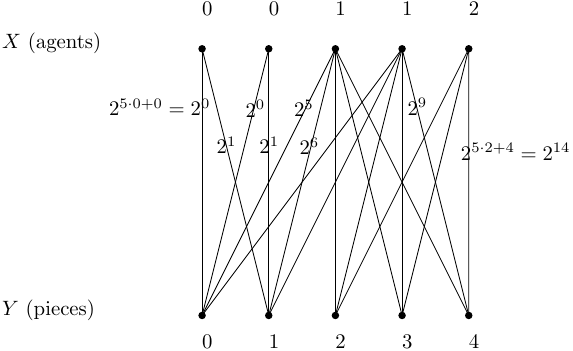}
\end{center}
\caption{\label{fig:symmetric-algorithm}
Bipartite agent-piece graph with weights determined by Algorithm \ref{alg:symmetric-proportional}.
\\~
The numbers below the pieces in $Y$ and above the agents in $X$ are their weights. 
Each piece has a unique weight, while each agent has a weight that is a 1-to-1 function of its set of neighbors.
The two leftmost agents have the same set of neighbors so they have the same weight (0); the two agents adjacent to them have the same set of neighbors so they have the same weight (1); the rightmost agent has a different set of neighbors.
\\
The numbers on the edges are their costs (due to space constraints, only some of costs are written).
}
\end{figure}
\begin{algorithm}
\caption{
\label{alg:symmetric-proportional}
A symmetric algorithm for proportional cake-cutting.
Based on \citet{cheze2018don}.
}
\begin{algorithmic}[1]
\REQUIRE A cake $C = [0,1]$, and a set $X = [n]$ of agents with nonatomic measures $(V_i)_{i=1}^n$ on $C$.
The measures are normalised such that $\forall i\in [n]: V_i(C) = n$.
\ENSURE A partition $C = Z_1\discup\cdots\discup Z_n$ such that $\forall i\in [n]: V_i(Z_i)\geq 1$.

\dottedline{}

%\STATE
%Initialize $W' := W$.
%~~~
%\COMMENT {$W'$ are the agents who have not been allocated a piece yet.}
\STATE  
\label{step:sp-mark}
Ask each agent $i\in X$ to produce \rev{a vector $z_i$ of} $n-1$ marks,
$0 < z_{i,1} < \ldots < z_{i,n-1} < 1$, which partition $C$ into $n$ sub-intervals with a value of exactly $1$ for $i$.
\STATE  
\label{step:sp-lex}
Choose \rev{from the mark vectors $\{z_i ~|~ i\in[n]\}$
the one that is smallest in lexicographic order}. Cut $C$ by these marks.
Denote the resulting pieces by $C_{1},\ldots,C_n$.
\STATE 
\label{step:sp-graph}
Define a bipartite graph $G$ with the agents of $X$ on the one side and the pieces (denoted by $Y$) on the other side. Add an edge $(i, C_j)$ whenever $V_i(C_j)\geq 1$. 
\STATE 
\label{step:sp-wgt-x}
Assign to each agent $i\in X$ a weight $w(i)\range{0}{n-1}$ that is a function of the set of neighbors $N_G(i)$, such that two agents $i,i'\in X$ have the same weight iff $N_G(i)=N_G(i')$.
\STATE 
\label{step:sp-wgt-y}
Assign to each piece $C_j\in Y$ a unique weight $w(C_j)\range{0}{n-1}$,
such that the weight of the leftmost piece (adjacent to $0$) is $0$, the weight of the next-leftmost piece is $1$, etc.
\STATE 
\label{step:sp-efm}
Assign to each edge $(i, C_j)\in E$
the cost $w(i,C_j) := 2^{n\cdot w(i) + w(C_j)}$.
Using Algorithm \ref{alg:minwgt},
find a minimum-cost maximum-cardinality envy-free matching $M$ in $G$.
\STATE 
\label{step:sp-give}
Let $X_Y\subseteq X$ be the set of all agents that are adjacent to all pieces in $Y$ (i.e., the agents $i\in X$ with $N_G(i) = Y$).
Give each agent in $X_Y$ an arbitrary piece from $N_M(X_Y)$.
\STATE 
\label{step:sp-partition}
Let $X_R := X_M \setminus X_Y$.
Partition the agents in $X_R$ to subsets according to their weight. I.e., for some $k\geq 1$, find a partition
$X_R = X_1\discup \cdots \discup X_k$,
such that $i,i'$ belong to the same $X_j$ iff $w(i)=w(i')$.
\STATE 
\label{step:sp-recurse-high}
For all $j\in[k]$, recursively divide the union of pieces in $N_M(X_j)$ among the agents in $X_j$.
\STATE 
\label{step:sp-recurse-low}
Recursively divide the union of pieces in $Y\setminus Y_M$ among the agents in $X\setminus X_M$
\rev{(by going back to step 1).}
\end{algorithmic}
\end{algorithm}

The first change from Lone Divider is that the initial partition should be decided in a way that depends only on the valuations. This is done in step \ref{step:sp-lex} using lexicographic ordering. 

For example, if Alice cuts the cake at $0.3,0.7$, Bob cuts at $0.4,0.6$ and Carl at $0.2,0.6$, then the algorithm selects Carl partition, since 
the cut-pair $(0.2,0.6)$ is lexicographically smaller than the other two cut-pairs. Hence, 
in the initial partition, the cake is cut at $0.2$ and $0.6$, so $C_1 = [0,0.2]$ and $C_2=[0.2,0.6]$ and $C_3 = [0.6,1]$.

The second change is in steps
\ref{step:sp-wgt-x}--\ref{step:sp-efm}.
Since there may be many different envy-free matchings, one of them must be selected in a way that depends only on the valuations. 
One way to select a unique envy-free matching is to assign to each edge, a cost that is a unique power of two. This guarantees that each subset of edges has a unique cost, so there is a unique minimum-cost maximum-cardinality envy-free matching.
However, symmetry requires that the edge costs themselves should depend only on the valuations.
Therefore, edges may have different costs only if they are adjacent to different pieces (since the pieces depend only on the valuations), or to agents with different valuations.
This motivates the weighting scheme in 
steps
\ref{step:sp-wgt-x}--\ref{step:sp-efm}.
An example of a graph with some edge costs is shown in 
Figure \ref{fig:symmetric-algorithm}.
Note that the length of the costs in binary is polynomial in the graph size.

In the special case that each agent has a unique set of neighbors, the agent weights are unique, the edge costs are unique powers of two, each matching has a unique cost, and thus the minimum-cost envy-free matching $M$ found in step \ref{step:sp-efm} is uniquely determined by the valuations.
In this special case, 
the algorithm can just proceed as in Algorithm \ref{alg:lone-divider-general}: give each piece in $Y_M$ to the agent matched to it in $X_M$, and recursively divide the remaining cake --- the union of pieces in $Y\sm Y_M$ --- among the remaining agents in $X\sm X_M$.

In the general case, there may be several different 
minimum-cost envy-free matchings, and 
step \ref{step:sp-efm} returns one of them, in a way that may depend on the agents' order.
Therefore, to preserve symmetry, care must be taken to ensure that the agents' values are not sensitive to the minimum-cost matching selected.
Note that all these minimum-cost matchings have the same set $X_M$ of matched agents --- it is exactly the set $X_L$ defined by the unique partition of Theorem \ref{thm:structure}.
Moreover, by the determination of edge costs, all these matchings have the same set $Y_M$ of matched pieces.
So the sets $X_M$ and $Y_M$ are uniquely determined by the valuations; only the pairing of agents in $X_M$ with pieces in $Y_M$ is not uniquely determined and must be handled in the following steps.

Consider first the set $X_Y$ defined in step \ref{step:sp-give}.
Note that it contains at least one agent --- the agent responsible to the lexicographically-smallest partition selected in step \ref{step:sp-lex}
(in Figure \ref{fig:symmetric-algorithm}, the set $X_Y$ contains the two agents with weight 1).
By envy-freeness of $M$, all agents in $X_Y$ are matched by $M$. 
By the determination of edge costs,
all minimum-cost envy-free matchings $M'$ in $G$ have the same set $N_{M'}(X_Y)$, i.e., in all these matchings, the same pieces are allocated to the agents in $X_Y$.
Since all agents in $X_Y$ value all pieces in $Y$ at exactly $1$, 
it is possible to give each agent in $X_Y$ an arbitrary piece in $N_M(X_Y)$, for example, based on the agents' indices. This arbitrary choice does not affect the value of any agent; all agents in $X_Y$ are treated symmetrically.

Consider now the sets $X_1,\ldots,X_k$ defined in step \ref{step:sp-partition}
(in Figure \ref{fig:symmetric-algorithm} there are two such sets: the set $X_1$ contains the two leftmost agents whose weight is $0$, and the set $X_2$ contains the rightmost agent whose weight is $2$).
For each $j\in[k]$, all agents in $X_j$ have the same weight, so they have the same set of neighbors.
Hence, by envy-freeness of $M$, 
if one agent in a set $X_j$ is matched by $M$, then all agents in $X_j$ must be matched by $M$ too.
By the determination of edge costs,
all minimum-cost envy-free matchings $M'$ in $G$ have the same set $N_{M'}(X_j)$, i.e., in all these matchings, the same pieces are allocated to the agents in $X_j$.
Here, it is not possible to give each agent in $X_j$ an arbitrary piece in $N_M(X_j)$, since each agent in $X_j$ may value the pieces in $N_M(X_j)$ differently.
However, by definition of the graph $G$, all agents in $X_j$ value each piece in $N_M(X_j)$ at least $1$, so they value the union of $N_M(X_j)$
at least $|X_j|$. Therefore, 
by recursively dividing the union of 
$N_M(X_j)$ among the agents in $X_j$, 
each agent in $X_j$ is guaranteed a value of at least $1$. All agents in $X_j$ are treated symmetrically. 

Finally, step \ref{step:sp-recurse-low} of Algorithm \ref{alg:symmetric-proportional}
is analogous to step \ref{step:ldg-recurse} of Algorithm \ref{alg:lone-divider-general}: 
by envy-freeness of $M$,
the agents in $X\sm X_M$ value each 
piece given away at less than $1$,
so they value the remaining cake at more than 
$|X\sm X_M|$, so the recursive call gives each of them a value of at least $1$.

From the above discussion, it follows that Algorithm \ref{alg:symmetric-proportional} is symmetric, it runs in polynomial time, and it finds a proportional cake-allocation, as claimed in Theorem \ref{thm:symmetric}.

\subsection{Other applications}
Another application of envy-free matching is found in a generalisation of cake-cutting called \emph{multi-cake cutting}. In this problem, the cake is made of $m$ pairwise-disjoint sub-cakes (``islands''), and each agent should be given a piece that overlaps at most $k$ islands, for some fixed integer $k\geq 1$. When $k<m$, it may be impossible to guarantee to each agent $1/n$ of the total value. A natural question is what fraction \emph{can} be guaranteed, as a function of $k, m, n$. 
Recently, \citet{segalhalevi2020multicake}
proved that the fraction is $\min({1\over n},{k\over m+n-1})$. The proof is constructive and uses envy-free matching in a different way than the Lone Divider algorithm.
Envy-free matchings were also applied for cutting a cake in the form of a general graph, representing e.g. a road network \citep{elkind2021graphical}.

\section{Fair allocation of discrete objects}
\label{sec:app-objects}
\newcommand{\objects}{C}

A \emph{fair object allocation} problem is a special case of the generic fair division problem, in which ---
\begin{itemize}
\item The resource $C$ is a finite set; its elements are called \emph{objects} or \emph{items}.
\item The value measures $(V_i)_{i \in[n]}$ are any additive set functions $V_i: 2^C \to \mathbb{R}$.
\end{itemize}
Objects with a positive value to all agents are usually called \emph{goods}; objects with a negative value are called \emph{bads} or \emph{chores}.

In this setting, 
a proportional allocation might not exist, i.e., 
it may be impossible to find \rev{a \tfair{} division} with threshold values $t_i = V_i(C)/n$; consider for example the case in which $C$ contains a single object.
Hence, proportionality is often relaxed to the \emph{maximin share}, which is defined below. 

\subsection{The maximin share}
For every agent $i\in [n]$ and integers $1\leq \ell\leq d$,
the \emph{$\ell$-out-of-$d$ maximin-share of $i$ from $\objects$}, denoted $\mms{i}{\ell}{d}{\objects}$, is defined as
\begin{align*}
\mms{i}{\ell}{d}{\objects} := 
~~
\max_{\mathbf{P}\in \partition{\objects}{d}}
~~
\min_{Z\in \union{\mathbf{P}}{\ell}}
~~
V_i(Z)
\end{align*}
where the maximum is over all partitions of $\objects$ into $d$ subsets, and the minimum is over all unions of $\ell$ subsets from the partition. 
Informally, $\mms{i}{\ell}{d}{\objects}$ is the largest value that agent $i$ can get by partitioning $\objects$ into $d$ piles and getting the worst $\ell$ piles.
Obviously $\mms{i}{\ell}{d}{\objects} \leq \frac{\ell}{d}V_i(\objects)$, and equality holds iff $\objects$ can be partitioned into $d$ subsets with the same value. Thus, $\mms{i}{\ell}{d}{\objects}$ can be thought of as $\frac{\ell}{d}V_i(\objects)$ ``rounded down to the nearest object''.
The maximin share with $\ell=1$ was introduced by \citet{budish2011combinatorial}.
The generalisation to arbitrary $\ell\geq 1$ was done by
\citet{Babaioff2017Competitive,babaioff2019fair}.%

The maximin-share is well-defined both for goods and for bads. For example, suppose $C$ contains three goods $o_1,o_2,o_3$, and some agent $i$ values them at $2, 3, 4$ respectively. Then $\mms{i}{1}{2}{\objects} = 4$, by the partition $\{o_1,o_2\},\{o_3\}$.
If the objects are bads and their values are $-2, -3, -4$, then $\mms{i}{1}{2}{\objects} = -5$ by the same partition.

Note that for goods $\mms{i}{1}{k}{\objects}$
is a weakly-decreasing function of $k$,
% (partitioning into a larger number of subsets decreases the value in each subset),
while for bads the opposite is true --- it is a weakly-increasing function of $k$.
%(partitioning into a larger number of subsets decreases the negative value in each subset).

\rev{
The values $t_i = \mms{i}{1}{n}{\objects}$ are particularly interesting, since they are the largest values that satisfy condition (1) for reasonable thresholds.
When $n=2$, these values satisfy condition (2) too (note that we only have to check the case $k=1$):
if one subset with value less than $\mms{i}{1}{2}{\objects}\leq V_i(\objects)/2$ is removed, then the value of the remaining objects is more than 
$V_i(\objects)/2 \geq \mms{i}{1}{2}{\objects}$.
Therefore, a \tfair{} allocation exists.
}

\citet{procaccia2014fair} prove that, for any $n\geq 3$, there might not exist a \tfair{} division with $t_i = \mms{i}{1}{n}{\objects}$. They present a \emph{multiplicative approximation} to the threshold values, 
$t_i = \gamma\cdot \mms{i}{1}{n}{\objects}$, for some fraction $\gamma\in(0,1)$. They present an algorithm attaining a \tfair{} division for a fraction $\gamma$ that equals $3/4$ for $n\in\{3,4\}$, and approaches $2/3$ as $n\to\infty$.

\rev{
An alternative approximation, suggested by \citet{budish2011combinatorial}, is $t_i =$ $\mms{i}{1}{n+1}{\objects}$. 
In contrast to the multiplicative approximation, 
the existence of a \tfair{} allocation with these thresholds depends only on the agents' rankings of the bundles, and not on the specific values assigned to them. 
In other words, an allocation that is \tfair{} with the 1-out-of-$(n+1)$ MMS thresholds remains fair even if the agents' value functions are modified, as long as the \emph{order} between the bundles' values remains the same for every agent. Therefore, we call this kind of approximation an \emph{ordinal approximation}.
}
Regarding the 1-out-of-$(n+1)$ MMS,
\citet{procaccia2014fair} say that 
\begin{quote}
``We have designed an algorithm that achieves this guarantee for the case of three players (it is already nontrivial).
Proving or disproving the existence of such allocations for a general number of players remains an open problem.''
\end{quote}
They do not present the algorithm for $n=3$.%
\footnote{Perhaps they wanted to write it in the margin but the margin was too narrow {\Smiley{}}}
Below we prove that the Lone Divider algorithm can be used to attain \rev{a \tfair{} division} with $t_i = \mms{i}{1}{2n-2}{\objects}$, which for $n=3$ coincides with $\mms{i}{1}{n+1}{\objects}$.

\subsection{Maximin-share allocation of goods}
The proof of Theorem \ref{thm:mms-goods} uses the following combinatorial lemmas.%
\footnote{
We are grateful to user bof of MathOverflow.com for the proof idea: https://mathoverflow.net/a/334754/34461
}

\begin{lemma}
\label{lem:remove-goods-a}
Let $(a_j)_{j=1}^{N}$ be real numbers 
such that for all $j\in[N]$: 
$a_j\in[0,1]$.
If $\sum_j a_j \geq A$ for some integer $A$,
then the $a_j$ can be partitioned into 
$\lceil A/2 \rceil$ subsets such that the sum of each subset is at least $1$.
\end{lemma}
\begin{proof}
Collect the $a_j$ sequentially into subsets, starting 
with $a_1$, until the sum of the current subset is at least one.
Continue constructing subsets in this way until all the $a_j$-s are arranged in subsets. Let $s+1$ be the number of constructed subsets, where the sum of the first $s$ subsets is at least $1$ and the sum of the last subset (which may be empty) is less than $1$.
Since $a_j\leq 1$, the sum of each of the first $s$ subsets is less than $2$. Therefore, the sum of all subsets is less than $2 s+1$. Hence,
$2 s+1 > A$
so $2 s \geq A$
so $s\geq \lceil A/2\rceil$, since $s$ and $A$ are integers.
\end{proof}

\begin{lemma}
\label{lem:remove-goods-b}
Let $(b_j)_{j=1}^{N}$, $(c_j)_{j=1}^{N}$  be real numbers 
such that for all $j\in[N]$: 
$b_j\in[0,c_j]$ and $c_j\geq 1$.
If $\sum_j b_j \geq (\sum_j c_j) - k$ for some integer $k\geq 0$,
then the $b_j$ can be partitioned into 
$\lceil (N-k)/2 \rceil$ subsets such that the sum of each subset is at least $1$.
\end{lemma}
Figuratively, the lemma says the following.
There are $N$ bottles of water, each of which contains at least $1$ litre.
Some water is spilled out of some of the bottles, such that the total amount spilled out is at most $k$ litres.
Then, the bottles can be grouped into  $\lceil(N-k)/2 \rceil$ subsets, such that the bottles in each subset together contain  at least $1$ litre of water.

\begin{proof}[Proof of Lemma \ref{lem:remove-goods-b}]
Let $a_j := b_j/c_j$ for all $j\in[N]$.
Then $a_j\in[0,1]$,
and 
\begin{align*}
\sum_{j=1}^N a_j 
&= \sum_{j=1}^N \frac{b_j}{c_j}
= N - \sum_{j=1}^N \frac{c_j-b_j}{c_j}
\geq  N - \sum_{j=1}^N (c_j-b_j)
&&\text{since $c_j\geq 1$}
\\
&= N - (\sum_j c_j- \sum_j  b_j)
\geq N-k
&&\text{since $\sum_j c_j- \sum_j  b_j \leq k$.}
\end{align*}
By Lemma \ref{lem:remove-goods-a},
the $a_j$ can be partitioned into $\lceil (N-k)/2 \rceil$ subsets with a sum of at least $1$.
Since $b_j = c_j\cdot a_j \geq a_j$, the sum of $b_j$ corresponding to the $a_j$ in each subset is at least $1$.
\end{proof}

We now prove Theorem \ref{thm:mms-goods}, which says that there always exists an allocation of goods among $n$ agents giving each agent $i$ a value of at least $\mms{i}{1}{2n-2}{\objects}$.

\begin{proof}[Proof of Theorem \ref{thm:mms-goods}]
We prove that  the threshold values 
$t_i = \mms{i}{1}{2n-2}{\objects}$ are reasonable, as defined in Definition \ref{def:reasonable}.
Condition (1) is obviously satisfied: by definition of MMS, each agent $i$ can partition $\objects$ into $n$ subsets worth at least
$\mms{i}{1}{n}{\objects}$, which is at least as large as 
$\mms{i}{1}{2n-2}{\objects}$.

For Condition (2), let $N := 2n-2$, and let $(C_j)_{j\in[N]}$ be a 1-out-of-$N$ MMS partition of agent $i$. 
Let $c_j := V_i(C_j) / t_i$. By definition of the MMS, $c_j \geq 1$ for all $j\in[N]$.

Suppose we remove some objects whose total value is at most $k\cdot t_i$.
Let $b_j := $ the total value remaining in $C_j$ after the removal, divided by $t_i$.
So $b_j\in[0, c_j]$, and $\sum_j b_j \geq (\sum_j c_j) - k$.
By Lemma \ref{lem:remove-goods-b}, the 
$b_j$ can be partitioned into $\lceil(N-k)/2\rceil$ subsets with a sum of at least $1$.
This corresponds to a partition of the remaining objects into $\lceil(N-k)/2\rceil$ bundles with a value of at least $t_i$.
Since 
$\lceil(N-k)/2\rceil = (n-1)-\lfloor k/2\rfloor\geq n-k$ whenever $k\geq 1$, condition (2) holds, and by Theorem \ref{thm:lone-divider-general}, the Lone Divider algorithm finds a \tfair{} division.
\end{proof}

\begin{remark}
As mentioned above, the Lone Divider algorithm \rev{requires $O(n^3)$ queries}.
However, with indivisible objects, answering each query requires agent $i$ to compute the 1-out-of-$(2n-2)$ MMS.
This requires solving an instance of the \emph{multi-way number partitioning problem}, which is known to be NP-hard.
If the number of agents and objects is sufficiently small, then the problem can be solved optimally by heuristic algorithms \citep{schreiber2018optimal}.
Otherwise, a PTAS of \citet{Woeginger1997Polynomialtime}
can be used to find in polynomial time, for each $\epsilon>0$, a partition in which the value of each part is at least $(1-\epsilon)\mms{i}{1}{2n-2}{\objects}$. Then, the Lone Divider algorithm can be executed with $t_i = (1-\epsilon)\mms{i}{1}{2n-2}{\objects}$.
\end{remark}

\begin{remark}
\label{rem:2n-2}
The Lone Divider algorithm cannot guarantee the 1-out-of-$(2n-3)$ MMS. As an example, suppose there are $4n-6$ goods. \rev{Suppose some agent Alice values some $2n-3$ goods at $1-\epsilon$ and the others at $\epsilon$, so her 1-out-of-$(2n-3)$ MMS equals $1$.
It is possible that the first divider partitions the goods such that all the $2n-3$ low-value goods are in a single bundle, and this bundle is allocated to another agent in the envy-free matching.
If, in the next round, Alice is the divider, then she cannot partition the remaining $2n-3$ high-value goods into $n-1$ bundles with a value of at least $1$.}
Recently, \citet{hosseini2021ordinal} developed a modified Lone Divider algorithm, that attains a better approximation.
\end{remark}

\subsection{Maximin-share allocation of bads}

The proof of Theorem \ref{thm:mms-bads} uses the following combinatorial lemma.

\begin{lemma}
\label{lem:remove-bads-a}
Let $(a_j)_{j=1}^{N}$ be real numbers 
such that for all $j\in[N]$: 
$a_j\in[0,1]$.
If $\sum_j a_j \leq A$ for some integer $A$,
then the $a_j$ can be partitioned into 
$2 A + 1$ subsets such that the sum of each subset is at most $1$.
\end{lemma}
\begin{proof}
Collect the $a_j$ sequentially into subsets, starting 
with $a_1$, until the sum of the current subset is at least one.
Continue constructing subsets in this way until all the $a_j$-s are arranged in subsets. Let $s+1$ be the number of constructed subsets, where the sum of the first $s$ subsets is at least $1$ and the sum of the last subset (which may be empty) is less than $1$.
The sum of all subsets is at least $s$, so $s\leq A$.
From each subset, remove the last element added to it. Since $a_j \leq 1$, we now have $2 s+1$ subsets each of which has a sum of at most $1$.
By adding empty subsets if needed, we get $2 A +1$ subsets with a sum of at most $1$.
\end{proof}

\begin{lemma}
\label{lem:remove-bads-b}
Let $(b_j)_{j=1}^{N}$, $(c_j)_{j=1}^{N}$  be real numbers 
such that for all $j\in[N]$: 
$b_j\in[0,c_j]$ and $c_j\in [0,1]$.
If $\sum_j b_j \leq (\sum_j c_j) - k$ for some integer $k\geq 0$,
then the $b_j$ can be partitioned into 
$2(N-k) + 1$ subsets such that the sum of each subset is at most $1$.
\end{lemma}
Figuratively, the lemma says the following.
There are $N$ bottles of water, each of which contains at most $1$ litre.
Some water is spilled out of some of the bottles, such that the total amount spilled out is at least $k$ litres.
Then, the bottles can be grouped into $2(N-k)+1$ subsets, such that the bottles in each subset together contain  at most $1$ litre.

\begin{proof}[Proof of Lemma \ref{lem:remove-bads-b}]
Let $a_j := b_j/c_j$ for all $j\in[N]$.
Then $a_j\in[0,1]$,
and 
\begin{align*}
\sum_{j=1}^N a_j 
&= \sum_{j=1}^N \frac{b_j}{c_j}
= N - \sum_{j=1}^N \frac{c_j-b_j}{c_j}
\leq  N - \sum_{j=1}^N (c_j-b_j)
&&\text{since $c_j\leq 1$}
\\
&= N - (\sum_j c_j- \sum_j  b_j)
\leq N-k
&&\text{since $\sum_j c_j- \sum_j  b_j \geq k$.}
\end{align*}
By Lemma \ref{lem:remove-bads-a},
the $a_j$ can be partitioned into $2 (N-k)+1$ subsets with a sum of at most $1$.
Since $b_j = c_j\cdot a_j \leq a_j$, the sum of $b_j$ corresponding to the $a_j$ in each subset is at most $1$.
\end{proof}

We now prove Theorem \ref{thm:mms-bads}, which says that there always exists an allocation of bads among $n$ agents giving each agent $i$ a value of at least $\mms{i}{1}{N}{\objects}$, where $N = \lfloor 2 n / 3 \rfloor$.

\begin{proof}[Proof of Theorem \ref{thm:mms-bads}]
We prove that  the threshold values 
$t_i = \mms{i}{1}{N}{\objects}$ are reasonable. 
Condition (1) is obviously satisfied: by definition of MMS, each agent $i$ can partition $\objects$ into $n$ subsets worth 
$\mms{i}{1}{n}{\objects}$. Since the values of all objects are negative, this value is at least 
$\mms{i}{1}{N}{\objects}$ for any $N\leq n$ (when the bads are partitioned into more subsets, the value in each subset is larger).

For condition (2),
let $(C_j)_{j\in[N]}$ be a 1-out-of-$N$ MMS partition of agent $i$. 
By definition of the MMS, $V_i(C_j) \geq t_i$.
The condition holds trivially whenever $k\leq n - N$, since in this case $n-k\geq N$, so adding $(n-k)-N$ empty bundles gives $n-k$ bundles with value at least $t_i$ (note that $t_i\leq 0$). Therefore, we assume now that $k \geq n-N+1$.

Let $c_j := V_i(C_j) / t_i$. 
Since both $V_i(C_j)$ and $t_i$ are negative, $0\leq c_j \leq 1$ for all $j\in[N]$.
Suppose we remove some objects whose total value is at most $k\cdot t_i$.
Let $b_j := $ the total value remaining in $C_j$ after the removal, divided by $t_i$.
So $b_j\in[0, c_j]$, and $\sum_j b_j \leq (\sum_j c_j) - k$
(again the sign of inequality is reversed since both quantities are negative).
By Lemma \ref{lem:remove-bads-b}, the 
$b_j$ can be partitioned into $2(N-k)+1$ subsets with a sum of at most $1$.
This corresponds to a partition of the remaining bads into bundles of value at least $t_i$.
By the definition of $N$,

\begin{align*}
&
3 N \leq 2 n
\\
\implies&
2 N +1 \leq 2 n - N +1 = n + (n-N+1)
\\
\implies&
2 N +1 \leq n + k && \text{whenever $k\geq n-N+1$}
\\
\implies&
2 (N -k) + 1 \leq n - k.
\end{align*}
By adding empty bundles if needed, agent $i$ can partition the remaining bads into $n - k$ bundles with a value of at least $t_i$.  
Therefore, condition (2) holds, and by Theorem \ref{thm:lone-divider-general}, the Lone Divider algorithm finds \rev{a \tfair{} division}.
\end{proof}

\begin{remark}
\label{rem:2n/3}
Suppose $n$ is divisible by $3$ and $N=2 n / 3 $. Then the Lone Divider algorithm cannot guarantee the 1-out-of-$(N+1)$ MMS.
Suppose the bads' values for Alice are
\begin{itemize}
\item 
$N+1$ big bads with value  $-1/2-1/2V$ each, for $V = N+\frac{n}{3}+1$;
\item 
$N+1$ sets of 
$V-1$ small bads with value $-1/2 V$ each. So the total value of each set is $-1/2+1/2V$, and the total number of small bads is $(N+1)(V-1) = N V +  \frac{n}{3}$.
\end{itemize}
The bads can be grouped into $N+1$ sets with value $-1$, so $\mms{A}{1}{N+1}{\objects} = -1$. 
But it is possible that the first envy-free matching allocates $k = n-N$ unacceptable bundles, each of which contains $2 V+1$ small bads (with total value $-1-1/2 V$);
note that the total number of small bads allocated is 
$(n-N)(2V+1) = \frac{n}{3}(2V+1) = N V + \frac{n}{3}$.
Then $N$ agents remain,
and there are $N+1$ big bads that Alice cannot partition into $N$ bundles with value at least $-1$.
\end{remark}

\subsection{Other maximin-share guarantees}
The Lone Divider algorithm can find \rev{a \tfair{} division} with various other threshold values (see Appendix \ref{sec:mms-variants}). 
For example, it can guarantee to each agent his
$2$-out-of-$(3n-2)$ MMS.
For some agents, this guarantee may be better than $1$-out-of-$(2n-2)$ MMS. For example, with $n=3$, if $\objects$ contains $7$ objects and agent $i$ values all of them at $1$, then
$\mms{i}{2}{7}{\objects}=2$ 
while $\mms{i}{1}{4}{\objects}=1$. 
More generally, for every integer $\ell\geq 2$, it is possible to find \rev{a \tfair{} division} with  
$t_i = \mms{i}{\ell-1}{\ell n-2}{\objects}$. 
The algorithm can also guarantee a multiplicative approximation of  $t_i = {2 n \over 3n - 1}\cdot \mms{i}{1}{n}{\objects} > {2 \over 3}\mms{i}{1}{n}{\objects} $.
\rev{
For proving all these variants, it is sufficient to prove that the threshold vectors are reasonable, using combinatorial lemmas analogous to Lemmas \ref{lem:remove-goods-a} and \ref{lem:remove-goods-b}}; see 
Appendix \ref{sec:mms-variants} for details.

Algorithm \ref{alg:lone-divider-general} can even make different guarantees to different agents. 
For example, it is possible to set 
$t_1 = \mms{1}{1}{2n-2}{\objects}$ and
$t_2 = \mms{2}{2}{3n-2}{\objects}$ 
and
$t_3 = {2\over 3}\mms{3}{1}{n}{\objects}$.
Finally, if some agents are computationally-bounded, and cannot calculate their MMS partition exactly,
they can use an approximation algorithm like that of \citet{Woeginger1997Polynomialtime} 
to calculate an approximate MMS partition --- a partition in which the value of each bundle is at least $(1-\epsilon)\mms{i}{1}{2n-2}{\objects}$, for some $\epsilon>0$.
They can then participate in Algorithm \ref{alg:lone-divider-general}
with 
$t_i = (1-\epsilon)\mms{i}{1}{2n-2}{\objects}$.
The algorithm then guarantees to these agents a value of at least their approximate MMS.
This does not affect the guarantee to computationally-unbounded agents, who are still guaranteed at least their exact MMS.

While there are now algorithms that attain better multiplicative approximation factors for goods
 \cite{amanatidis2017approximation,barman2017approximation,ghodsi2018fair,garg2019approximating} and for bads
 \cite{barman2017approximation,huang2019algorithmic},
 it may be useful to have a simple algorithm that allows each agent to choose between a multiplicative and various ordinal approximations.%
\footnote{
Recently, \citet{bogomolnaia2020guarantees} have shown that an algorithm very similar to Algorithm \ref{alg:lone-divider-general} attains a different approximate-fairness notion that they call ``Pro1'' (proportionality up to at most one object).
}

In general, the Lone Divider method cannot guarantee a 1-out-of-$(n+1)$ MMS allocation (see Remark \ref{rem:2n-2}).  
Corollary \ref{thm:sufficient2} can help to identify special cases in which such allocations do exist.
First, recall that, if all $n$ agents have the same valuation function, then by definition a 1-out-of-$n$ MMS allocation exists.  The same is true if all $n$ agents have the same MMS partition (even if their valuations are different).
Moreover, the same is true even if only $n-1$ agents have the same MMS partition, since then it is possible to let the $n$-th agent pick a bundle and divide the remaining $n-1$ bundles among the remaining $n-1$ agents. The following theorem generalizes this observation.
\begin{theorem}
\label{thm:nl}
Let $\ell\geq 1$ be an integer.
Given a set of indivisible goods, 
if there exists a partition $P$ of the goods into $n$ bundles,
in which $n-\ell$ agents value each bundle at least as their 1-out-of-$(n+\ell-1)$ MMS, 
then there exists a 1-out-of-$(n+\ell-1)$ MMS allocation.
In particular, existence \rev{is ensured} if $n-\ell$ agents have identical valuations.
\end{theorem} 

\begin{proof}[Proof of Theorem \ref{thm:nl}]
Let $N := n+\ell-1$.
Apply the Lone Divider algorithm with $t_i = \mms{i}{1}{N}{C}$ for all $i\in[n]$, using in Step 1 the partition $P$ from the theorem statement. 
In Step 2, by assumption, each bundle in $Y$ is adjacent to at least $n-\ell$ agents in $X$. Hence, by Corollary \ref{thm:sufficient2}, the number of matched agents is at least $n-\ell$. Denote this number by $k$.
Lemma \ref{lem:remove-goods-b} implies that each of the remaining $n-k$ agents can partition the remaining objects into 
$\lceil (N-k) / 2 \rceil$ bundles worth at least $t_i$.
The assumption $k\geq n-\ell$ implies that 

\begin{align*}
\lceil (N-k) / 2 \rceil &= \lceil (n+\ell-1-k) / 2 \rceil
\\
& \geq 
\lceil (n+(n-k)-1-k) / 2 \rceil
\\
& =
\lceil (2n-2k-1) / 2 \rceil
\\
& =
n-k.
\end{align*}
Hence, we can proceed with Algorithm \ref{alg:lone-divider-general} and get \rev{a \tfair{} division.}
\end{proof}

As an example, for $n=4$, a 1-out-of-$5$ MMS allocation exists 
whenever there exists a partition in which some two agents value each bundle by at least their 1-out-of-$5$ MMS; particularly, when some two agents have identical valuations.
The general case remains open.

\section{Extensions and open problems}
\label{sec:future}

\subsection{Symmetric envy and non-bipartite graphs}
Our definition of an envy-free matching is asymmetric in that it considers the envy of vertices in $X$ only.
For example, the odd path in Figure \ref{fig:efm-examples}(c) has a non-empty EFM w.r.t. $Y$ but not w.r.t. $X$.

One can define a matching as \emph{symmetric-envy-free} if any unmatched vertex in $G$ is not adjacent to any matched vertex in $G$.
This definition extends naturally to non-bipartite graphs.

With this symmetric definition, the algorithmic problems studied here become much easier (and less interesting). 
Suppose first that $G = (V,E)$ is a connected graph (bipartite or not). 
If some matching in $G$ saturates 
some vertex $v\in V$ but does not saturate some other vertex $v'\in V$, 
then on the path between $v$ and $v'$, at least one vertex is envious. 
Therefore, a symmetric envy-free matching saturates either all vertices or no vertices.
Hence, a connected graph admits a
non-empty symmetric-envy-free matching if and only if it admits a perfect matching. 

Therefore, an arbitrary graph admits a non-empty symmetric-envy-free matching if and only if it has a connected component admitting a perfect matching. 
A maximum cardinality (minimum cost) symmetric-envy-free matching is just the union of all perfect matchings (of minimum cost) of such connected components.

\subsection{Star matchings}
\label{sub:star}
The envy-freeness concept can be generalised from a matching (a set of vertex-disjoint edges)
to an \emph{$r$-star matching} --- 
a set of vertex-disjoint copies of the 
the star $K_{1,r}$, where the star center is in $X$ and the star leaves are in $Y$.
An \emph{envy-free $r$-star matching} is then an $r$-star matching in which every vertex in $X$ that is not matched (as a center), is disconnected from any vertex in $Y$ that is matched (as a leaf).
Our results can be easily generalised to $r$-star matchings. For example, 
the following theorem generalises Theorem \ref{thm:algos}(a) and Corollary \ref{thm:sufficient}(b).
\begin{theorem}
\label{thm:star}
For every integer $r\geq 1$,

(a) 
There is a polynomial-time algorithm that, given any bipartite graph $G=(X \discup Y, E)$, finds a maximum-cardinality envy-free $r$-star matching in $G$.

(b) 
If $|N_G(X)|\geq r|X|\geq 1$, 
then $G$ admits a non-empty envy-free $r$-star matching.
\end{theorem}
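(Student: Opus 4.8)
The plan is to reduce the $r$-star problem to the ordinary envy-free matching problem via a vertex blow-up. Given $G=(X\discup Y,E)$, build a new bipartite graph $G'=(X'\discup Y,E')$ by replacing each vertex $x\in X$ with $r$ twin copies $x^{(1)},\dots,x^{(r)}$, and putting $(x^{(i)},y)\in E'$ whenever $(x,y)\in E$. Call a matching $M'$ in $G'$ \emph{group-respecting} if, for every $x\in X$, either all $r$ copies of $x$ are saturated by $M'$ or none of them are. A group-respecting matching in $G'$ corresponds to an $r$-star matching in $G$ (the $r$ edges at the copies of a saturated $x$ become the $r$ leaves of a star centered at $x$, and disjointness of the copies' matches makes the stars vertex-disjoint), and vice versa; moreover this correspondence preserves envy-freeness, because $x$ fails to be a center exactly when all of its copies are unsaturated, and the matched leaves in $Y$ are exactly the vertices of $Y$ saturated in $G'$. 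Hence a maximum-size envy-free $r$-star matching in $G$ corresponds to a maximum-cardinality group-respecting envy-free matching in $G'$, its size being $1/r$ times the number of vertices it saturates.

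The key observation is that the canonical partition of $G'$ supplied by Theorem \ref{thm:structure} is automatically group-respecting. Indeed, any permutation $\sigma$ of $X'$ that permutes the copies inside each group while fixing $Y$ pointwise is an automorphism of $G'$; applying $\sigma$ to the unique partition $X'=X'_S\discup X'_L$, $Y=Y'_S\discup Y'_L$ yields another partition satisfying conditions (a)--(c) of Theorem \ref{thm:structure}, so by uniqueness $\sigma(X'_L)=X'_L$. Since $\sigma$ may arbitrarily permute each group, $X'_L$ is a union of full groups; write $X_L:=\{x\in X: x^{(1)},\dots,x^{(r)}\in X'_L\}$. Now take any matching $W'$ in $G'[X'_L,Y'_L]$ saturating $X'_L$, which exists by Theorem \ref{thm:structure}(c); it saturates every copy of every $x\in X_L$ and touches no vertex of $X'_S$, so it is group-respecting, and by Theorem \ref{thm:structure}(d) it is envy-free in $G'$. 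Translating back gives an envy-free $r$-star matching in $G$ with $|X_L|$ stars. Conversely, any envy-free $r$-star matching in $G$ gives a group-respecting envy-free matching in $G'$, which by Theorem \ref{thm:structure}(e) lies inside $G'[X'_L,Y'_L]$ and hence has at most $|X'_L|/r=|X_L|$ stars. So $W'$ is optimal, and the whole computation --- build $G'$, run Algorithm \ref{alg:cardinality} on $G'$, translate back --- runs in polynomial time (we may assume $r\le |Y|$, as otherwise no $r$-star matching exists and we return $\emptyset$), which proves part (a).

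For part (b), suppose $|N_G(X)|\ge r|X|\ge 1$. In $G'$ we have $N_{G'}(X')=N_G(X)$ and $|X'|=r|X|$, hence $|N_{G'}(X')|\ge|X'|\ge 1$; by Corollary \ref{thm:sufficient}(c), $G'$ admits a nonempty envy-free matching, and by Theorem \ref{thm:structure}(e) any such matching lives in $G'[X'_L,Y'_L]$, so $X'_L\ne\emptyset$. By the group-respecting property established above, $X_L\ne\emptyset$, so the matching $W'$ of the previous paragraph is nonempty; its image is a nonempty envy-free $r$-star matching in $G$.

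I expect the only genuinely nontrivial point to be the claim that the canonical partition of $G'$ is group-respecting; everything else is routine translation between the two models. An alternative to the automorphism argument is to run the alternating-sequence construction of Section \ref{sec:pre} directly on $G'$ and observe that, since twin copies have identical neighbourhoods, they always enter the sets $X_i$ together --- but the uniqueness-plus-symmetry argument is cleaner and reuses Theorem \ref{thm:structure} as a black box.
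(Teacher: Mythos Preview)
Your proof is correct and uses the same blow-up construction $G'$ as the paper, reducing the $r$-star problem to the ordinary envy-free matching problem. The one substantive difference is in how you justify that the optimum in $G'$ is group-respecting. You argue structurally: the automorphisms of $G'$ permuting twin copies must fix the unique partition of Theorem~\ref{thm:structure}, hence $X'_L$ is a union of full groups. The paper instead observes directly that \emph{every} envy-free matching $W'$ in $G'$ is already group-respecting --- if some copy $v_i^x$ is saturated by $W'$ and another copy $v_j^x$ is not, then $v_j^x$ is adjacent to the $Y$-vertex matched to $v_i^x$ (the copies have identical neighbourhoods) and so envies it. This one-line remark makes the correspondence between envy-free matchings in $G'$ and envy-free $r$-star matchings in $G$ a cardinality-scaling bijection (up to the $r!$ ways of assigning leaves to copies), so Algorithm~\ref{alg:cardinality} applied to $G'$ returns the answer without invoking the uniqueness clause of Theorem~\ref{thm:structure}. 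Your route is equally valid and is a nice illustration of how the uniqueness statement can be used as a black box; the paper's route is slightly more elementary.
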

\begin{proof}
Given $G$, 
construct an auxiliary bipartite graph $G' := (X' \discup Y,E')$, 
where $X'$ has $r$ clones of every vertex in $X$, 
and $E'$ has an edge from each clone $v^x$ of $x \in X$ to every vertex $y\in N_G(x)$.
For $x \in X$, let $v_1^x,\ldots,v_r^x$ denote the $r$ clones of $x$ in $X'$. 
There is a many-to-one correspondence between 
envy-free matchings in $G'$
and
envy-free $r$-star matchings in $G$:

(1) Consider any envy-free matching $W'$ in $G'$. For all $x\in X$, 
consider 
the subgraph 
\begin{align*}
G'[\{v_1^x,\ldots,v_r^x\}, N_G(x)].
\end{align*}
This subgraph forms a complete bipartite graph. Hence, if $(v_i^x, y) \in W'$ for some $i \in [r]$, then $W'$ being envy-free in $G'$ means that $W'$ must saturate all of $\{v_1^x,\ldots,v_r^x\}$. All these vertices can only be paired to vertices of $N_G(x)$; all edges thus used by $W'$ (in $G'$) correspond to different edges in $G$ whose one end is $x$ (the ends in $N_G(x)$ are distinct). 
Collapsing every set $\{v_1^x,\ldots,v_r^x\}$ saturated by $W'$ back to its origin $x$ in $G$ gives an envy-free $r$-star matching in $G$. 

(2) Given an envy-free $r$-star matching $W$ in $G$, 
create a matching $W'$ in $G'$ by connecting, for each saturated vertex $x\in X$, 
each clone $v_i^x$ of $x$ to one of the $r$ vertices in $Y$ matched to $x$ in $W$.
Note that, for each saturated vertex $x\in X$, there are $r!$ ways to connect the clones of $x$ to its neighbors, so there are many different matchings $W'$ corresponding to $W$.
However, all such matchings have the same cardinality, and every such matching is envy-free in $G'$: 
for every vertex $x$ that is saturated by $W$, all its clones are saturated by $W'$ and thus are not envious;
for every vertex $x$ that is unmatched by $W$, all its clones are not adjacent to any matched vertex in $Y$, and thus are not envious either.

We now use the above correspondence for proving the two claims in the theorem.

(a) The size (number of edges) of the matchings $W'$ in $G'$ is exactly $r$ times the size (number of stars) of the corresponding matching $W$ in $G$. Hence, applying Algorithm \ref{alg:cardinality} to $G'$ yields a maximum-cardinality $r$-star matching in $G$.

(b) If $|N_G(X)|\geq r|X|\geq 1$, then 
$|N_{G'}(X')|\geq |X'|\geq r\geq 1$, so 
$G'$ satisfies the premise of
Corollary \ref{thm:sufficient}(b)
and thus admits a non-empty envy-free matching $W'$. It corresponds to an envy-free $r$-star matching $W$ in $G$.
\end{proof}

In an $r$-star matching, each vertex in $X$ is matched to either $0$ or $r$ vertices in $Y$.
One can also consider allocation problems in which each vertex in $X$ may be connected to \emph{any number} in $\{0,\ldots,r\}$ of vertices in $Y$ (but each vertex in $Y$ may still be connected to at most one vertex in $X$).
A many-to-one matching $M\subseteq E$ is called 
\emph{envy-free} if for every two vertices
$x_1, x_2\in X$, 
the number of neighbors of $x_1$ matched to $x_1$
is at least as large as the number of neighbors of $x_1$ matched to $x_2$.
This definition reduces to Definition \ref{def:efm} when $r=1$.

When $r=\infty$, the problem of finding an envy-free many-to-one matching is equivalent to the problem of \emph{fair allocation with binary additive valuations}. $Y$ is a set of discrete goods, and $X$ is a set of agents. Each agent values each object at either $0$ or $1$, and values each set of objects as the sum of the values of its elements.
The goal is to allocate the objects among the agents such that each agent values its own bundle at least as much as the bundle of any other agent.
\citet{Aziz2015Fair} proved that deciding whether an envy-free allocation of all objects in $Y$ exists is NP-complete (remark after Theorem 11; the same result was proved in a different way by \citet{hosseini2019fair} at Proposition 3).
Therefore, the problem of finding an envy-free one-to-many matching of maximum cardinality is NP-hard.
However, the reductions 
consider only allocations in which all objects are allocated --- they do not allow partial allocations.%
\footnote{
For example, in the reduction of 
\citet{hosseini2019fair},
Property 2 does not necessarily hold for partial allocations: it is possible that each edge-agent receives a single edge-good, each dummy-agent receives nothing, and the vertex-goods remain unallocated. This is a non-empty envy-free allocation that does not correspond to an equitable coloring of $G$.
}
Therefore, the following problem remains open.

\begin{open}
Is there a polynomial-time algorithm for deciding whether a given bipartite graph admits a non-empty envy-free one-to-many matching?
\end{open}

\subsection{Maximum value envy-free matching}
Suppose that the edge weights are interpreted as \emph{values} rather than costs, and thus one is interested in finding an envy-free matching of \emph{maximum} total value.

The unbalanced Hungarian method can be easily adapted to find an \xsating{} matching of maximum value \citep{RT12}.
Hence, Algorithm \ref{alg:minwgt} can be 
adapted to find a maximum-cardinality envy-free matching of maximum value. 
The following lemma shows that, whenever all values are non-negative,
the maximum value of \emph{any} envy-free matching is always attained by some \emph{maximum-cardinality} envy-free matching.%
\footnote{
Note that the above does not hold for arbitrary matchings.
}

\begin{lemma}\label{lem:x-saturating}
Let $G:=(X \discup Y,E)$ be a bipartite graph.
Let $W$ be an envy-free matching in $G$. 
Then, $W$ is contained in some maximum-cardinality envy-free matching in $G$.
\end{lemma}
\begin{proof}
Let $M$ be a maximum-cardinality envy-free matching in $G$.
By Theorem \ref{thm:structure}, this $M$ is contained in $G[X_L,Y_L]$ and saturates $X_L$.
For each vertex $x\in X_L$, denote by $N_M(x)$ the vertex in $Y_L$ matched to it by $M$.
Let $M' := \{(x,N_M(x)): x\in X_L, x\text{ is unsaturated by }W\}$,

Envy-freeness of $W$ implies that,
for any vertex $x\in X_L$ unsaturated by $W$, $N_M(x)$ must be unsaturated by $W$ too.
Hence, $W\cup M'$ is a matching. It saturates $X_L$, so by Theorem \ref{thm:structure} it is a maximum-cardinality envy-free matching in $G$.
\end{proof}
Hence, by taking 
Algorithm \ref{alg:minwgt} and replacing ``minimum-cost'' by ``maximum-value'', 
one gets an algorithm for finding a maximum-value envy-free matching.

However, this algorithm is meaningful only when $w(x,y)$ represent the value of the pairing $(x,y)$ to ``society'' as a whole (or to the social planner), since it appears in the maximisation objective but not in the envy definition.
An alternative interpretation is that $w(x,y)$ represents the subjective value of $y$ to $x$. This interpretation leads to a different definition of envy-freeness. Given a function $w$ on the edges and a matching $M$, 
define:

\begin{align*}
w(x, M) = 
\begin{cases}
w(x,y) & \text{If $(x,y)\in M$;}
\\
0 & \text{If $x$ is unmatched by $M$.}
\end{cases}
\end{align*}
A matching $M\subseteq E$ is called \emph{$w$-envy-free} if, 
for every vertex $x\in X$ and every matched vertex $y'\in Y_M$: $w(x,M) \geq w(x,y')$, i.e, every agent in $X$ weakly prefers his or her own house (if any) to any house assigned to another agent.
This definition reduces to Definition \ref{def:efm} when all edges have the same weight.
The problem of finding a $w$-envy-free matching was studied by several authors in parallel to the present work:
\begin{itemize}
\item \citet{gan2019envy} present a polynomial-time algorithm for finding an $X$-saturating $w$-envy-free matching, if and only if such a matching exists.
\item 
\citet{beynier2019local}
consider a similar problem in a more complex setting where agents are located on a network, and each agent only envies his or her neighbors in the network.
\item 
\citet{kamiyama2021complexity} study the problem of finding an $X$-saturating matching that is not necessarily $w$-envy-free, but it maximizes the number of vertices of $X$ for which the $w$-envy-free condition is satisfied. They prove that this problem is NP-hard even for binary weights (all weights are either $0$ or $1$). Moreover, for general weights, the problem is hard to approximate under some common complexity-theoretic assumptions. 
\end{itemize}

\rev{
In contrast to our work, 
these three works do not consider partial matchings (matchings that do not necessarily saturate $X$).
To illustrate the difference between the settings, suppose $|X|=|Y|=n$, and
\begin{align*}
w(x_i,y_j) = 
\begin{cases}
2 & 1\leq i\leq n, j=n;
\\
1 & i=j < n.
\end{cases}
\end{align*}
In the unique $X$-saturating matching, $x_i$ is matched to $y_i$ for all $i\in[n]$, and $w$-envy-freeness is satisfied only for $x_n$.
However, the matching in which $x_i$ is matched to $y_i$ for all $i\in[n-1]$ and $x_n$ remains unmatched is a partial $w$-envy-free matching of size $n-1$.
% ($x_n$ is not envious since it is adjacent only to $y_n$, which is unmatched).
}

\begin{open}
Is there a polynomial-time algorithm that, for any value function $w$, finds a partial $w$-envy-free matching of maximum cardinality? Of maximum value?
\end{open}

\ifdefined\RANKMAXIMAL
\begin{remark}
\citet{irving2006rank} present an algorithm for finding a \emph{rank-maximal} matching --- a matching that maximises the number of agents who are matched to their first choice, then the number of agents who are matched to their second choice, etc. 
A matching that is rank-maximal among the set of all envy-free matchings can be found by an algorithm similar to Algorithm \ref{alg:minwgt}: in the last step, instead of solving the unbalanced assignment problem, apply the algorithm of \citet{irving2006rank} to $G[X_L,Y_L]$. 
\end{remark}
\fi

\subsection{Relaxations of envy-free matching}
Since non-empty envy-free matchings might not exist, one may be interested in relaxations. 
For example, given a real $\alpha >0$, 
a matching $M$ is called \emph{$\alpha$-fraction envy-free} if for every unmatched $x \in X\sm X_M$:
$
|N_G(x) \cap Y_M| \leq \alpha |N_G(x)|
$.
That is, agents unsaturated by $M$ are willing to ``tolerate'' at most an $\alpha$-fraction of their acceptable houses being assigned to someone else.
Alternatively, given an integer $c\geq 0$, 
$M$ is called \emph{$c$-additive envy-free}
if 
for every unmatched $x \in X\sm X_M$:
$
|N_G(x) \cap Y_M| \leq c
$. 

\begin{open}
Is there a polynomial-time algorithm for finding 
a maximum-cardinality matching
among the approximate-envy-free matchings, for any of the above approximation notions?
\end{open}
%The complete-graph example in Figure \ref{fig:efm-examples}(c) indicates that such matchings might not exist for any $\alpha<1$ and $c\geq 0$. Still one may look for a maximum-cardinality matching satisfying these notions.

From a probabilistic perspective, it may be interesting to calculate the probability that a non-empty envy-free matching exists in a random graph. This is related to the problem of calculating the probability that an envy-free allocation exists, which has recently been studied by e.g. \citet{Dickerson2014Computational} and
 \citet{manurangsi2019envy}.

\section{Acknowledgments}
Erel acknowledges Zur Luria \citep{Luria2013EnvyFree}, 
who first provided an existential proof to Corollary \ref{thm:sufficient}(b), as well as instructive answers by Yuval Filmus, Thomas Klimpel and bof in MathOverflow.com,
%https://mathoverflow.net/a/334754/34461
Max, Vincent Tam and Elmex80s in MathStackExchange.com,
%https://math.stackexchange.com/q/3951772/29780
and helpful comments by anonymous referees to the WTAF 2019 workshop, the EC 2020 conference, and the Information Sciences journal.
This research is partly supported by Israel Science Foundation grant 712/20.

\newpage
\appendix

\renewcommand{\thesection}{\Alph{section}}

\section{Variants of Maximin Share Fairness}
\label{sec:mms-variants}
This appendix shows various fairness guarantees that can be attained by the Lone Divider algorithm when allocating discrete goods. 

\subsection{Ordinal approximation}
The first fairness guarantee uses several lemmas.

\begin{lemma}
\label{lem:remove-goods-aa}
Let $\ell \geq 2$ be an integer.
Let $(a_j)_{j=1}^{N}$ be real numbers 
such that for all $j\in[N]$: 
$a_j\in[0,1]$.
If $\sum_j a_j \geq A$ for some integer $A$,
then the $a_j$ can be partitioned into 
$\lfloor (A+1)/\ell \rfloor$ subsets such that the sum of each subset is at least $\ell-1$.
\end{lemma}
Note that Lemma \ref{lem:remove-goods-a} corresponds to the special case  $\ell=2$.
\begin{proof}
Collect the $a_j$ sequentially into subsets, starting 
with $a_1$, until the sum of the current subset is at least $\ell-1$.
Continue constructing subsets in this way until all the $a_j$-s are arranged in subsets. Let $s+1$ be the number of constructed subsets, where the sum of the first $s$ subsets is at least $\ell-1$ and the sum of the last subset (which may be empty) is less than $\ell-1$.
Since $a_j\leq 1$, the sum of each of the first $s$ subsets is less than $\ell$. Therefore, the sum of all subsets is less than $\ell s+\ell-1$. Hence,
$\ell s+\ell-1 > A$
so $\ell(s+1) > A+1$
so $s + 1 > \lfloor (A+1)/\ell \rfloor$
so $s\geq \lfloor (A+1)/\ell \rfloor$.
\end{proof}

\begin{lemma}
\label{lem:products}
Let $L \geq 1$ be an integer.
Let $(d_j)_{j=1}^{N}$, $(c_j)_{j=1}^{N}$  be real numbers such that for all $j\in[N]$:
$d_j\in[0,1]$,
and the sum of every $L$-tuple of $c_j$ is at least $L$.
If $\sum_j d_j\geq L$, then 
$\sum_j c_j\cdot d_j \geq \sum_j d_j$
(in particular
$\sum_j d_j\geq L$
implies 
$\sum_j c_j\cdot d_j \geq L$
and
$\sum_j d_j>L$
implies 
$\sum_j c_j\cdot d_j >L$
).

\end{lemma}
Note that in the special case  $L=1$, $c_j\geq 1$ for all $j\in[N]$, so the claim is trivial.

\newcommand{\dgeq}{D_{\geq}}
\newcommand{\deq}{D_{=}}
\begin{proof}
%Assume w.l.o.g. that $c_1\geq \cdots \geq c_N$.
Let $\dgeq \subset [0,1]^N$ be the set of vectors $\mathbf{d}$ satisfying the lemma condition, i.e., $\sum_j d_j\geq L$.
Let $\deq\subset \dgeq$ be those vectors satisfying $\sum_j d_j = L$.
The lemma can be stated as a dot product: $\mathbf{c}\cdot \mathbf{d}\geq L$.
We first show that it holds for all $\mathbf{d}\in \deq$ and then for all $\mathbf{d}\in \dgeq$.

Consider a vector $\mathbf{d}\in \deq$.
If it has only integer coordinates, then it must have exactly $L$ ones and $N-L$ zeros, so the sum $\mathbf{c}\cdot \mathbf{d} = \sum_j c_j \cdot d_j$ contains exactly $L$ elements from $\mathbf{c}$. By assumption, their sum is at least $L$, so the lemma holds.

Otherwise, $\mathbf{d}$ has a non-integer coordinate, say $d_{j1} \in(0,1)$. Since the sum of coordinates is an integer --- it must have another non-integer coordinate, say $d_{j2} \in (0,1)$.
Given $\epsilon>0$, 
define a \emph{$(j_1,j_2,\epsilon)$-shift} of $\mathbf{d}$ as a vector $\mathbf{d}'$ given by
\begin{align*}
d'_{j} = 
\begin{cases}
d_{j1}-\epsilon & j=j_1
\\
d_{j2}+\epsilon & j=j_2
\\
d_j & \text{otherwise}
\end{cases}
\end{align*}
Choosing $\epsilon = \min(d_{j1}, 1-d_{j2})$
guarantees that $\mathbf{d}'\in D_=$ and it has fewer non-integer coordinates than $\mathbf{d}$
(either $d'_{j1}=0$ and $d'_{j2}=d_{j1}+d_{j2}$, or $d'_{j2}=1$ and $d'_{j1} = d_{j1}+d_{j2}-1$).
Choosing $j_1,j_2$ such that $c_{j1}\geq c_{j2}$ guarantees that each such shift decreases the dot product,
\begin{align*}
\mathbf{c}\cdot \mathbf{d}'
=
\mathbf{c}\cdot \mathbf{d}
-
c_{j1}\cdot \epsilon
+
c_{j2}\cdot \epsilon
\leq
\mathbf{c}\cdot \mathbf{d}
\end{align*}
There is a finite sequence of $(j_1,j_2,\epsilon)$-shifts culminating in a vector $\mathbf{d}''\in D_=$ with only integer coordinates. Therefore,
\begin{align*}
\mathbf{c}\cdot \mathbf{d}
\geq
\mathbf{c}\cdot \mathbf{d}''
\geq L,
\end{align*}
so the claim holds for all $\mathbf{d}\in D_=$.

For a vector $\mathbf{d}\in \dgeq$,
Let $\mathbf{d'} := \mathbf{d}\cdot \frac{L}{\sum_j d_j}$; by assumption $\mathbf{d'}\in \deq$, so $\mathbf{c}\cdot \mathbf{d'}\geq L$. Now,
\begin{align*}
\mathbf{c}\cdot \mathbf{d} 
= \frac{\sum_j d_j}{L} \mathbf{c}\cdot \mathbf{d'} 
= \sum_j d_j\cdot \frac{\mathbf{c}\cdot \mathbf{d'}}{L}  
\geq \sum_j d_j \geq L,
\end{align*}
so the claim holds for all $\mathbf{d}\in \dgeq$ too.
\end{proof}

\begin{lemma}
\label{lem:remove-goods-bb}
Let $\ell \geq 2$ be an integer.
Let $(b_j)_{j=1}^{N}$, $(c_j)_{j=1}^{N}$  be real numbers 
such that for all $j\in[N]$: 
$b_j\in[0,c_j]$,
and the sum of every $(\ell-1)$-tuple of $c_j$ is at least $\ell-1$.
If $\sum_j b_j \geq (\sum_j c_j) - (\ell-1)\cdot k$ for some integer $k\geq 0$,
then the $b_j$ can be partitioned into 
$\lfloor(N-(\ell-1)k+1)/\ell \rfloor$ subsets such that the sum of each subset is at least $\ell-1$.
\end{lemma}
Figuratively, the lemma says the following.
There are $N$ bottles of water, such that 
each $(\ell-1)$-tuple of bottles contains at least $\ell-1$ litres.
Some water is spilled out of some of the bottles, such that the total amount spilled out is at most $(\ell-1)\cdot k$ litres.
Then, the bottles can be grouped into $\lfloor(N-(\ell-1)k+1)/\ell \rfloor$ subsets, such that each subset contains at least $\ell-1$ litres of water.
Note that Lemma \ref{lem:remove-goods-b} corresponds to the special case  $\ell=2$.

\begin{proof}[Proof of Lemma \ref{lem:remove-goods-bb}]
Let $a_j := b_j/c_j$ for all $j\in[N]$.
Then $a_j\in[0,1]$,
and 
\begin{align*}
\sum_{j=1}^N a_j 
&= \sum_{j=1}^N \frac{b_j}{c_j}
= N - \sum_{j=1}^N \frac{c_j-b_j}{c_j}.
\end{align*}
Let $d_j := \frac{c_j-b_j}{c_j}$. By assumption,
\begin{align*}
\sum_j c_j\cdot d_j = \sum_j c_j- \sum_j  b_j
\leq (\ell-1)k.
\end{align*}
Applying Lemma \ref{lem:products} in the contrapositive direction with $L := (\ell-1)k$ implies that 
\begin{align*}
\sum_j d_j \leq (\ell-1)k
\end{align*}
too. Therefore,
\begin{align*}
\sum_{j=1}^N a_j  = N - \sum_j d_j \geq N - (\ell-1)k.
\end{align*}
By Lemma \ref{lem:remove-goods-aa},
the $a_j$ can be partitioned into $\lfloor (N-(\ell-1)k+1)/\ell \rfloor$ subsets with a sum of at least $\ell-1$.
In each such subset $G$,
\begin{align*}
\sum_{j\in G}a_j \geq \ell-1.
\end{align*}
Applying Lemma \ref{lem:products} with $L := \ell-1$ and $d_j := a_j$ yields
\begin{align*}
\sum_{j\in G}c_j\cdot a_j \geq \ell-1
\implies 
\sum_{j\in G}b_j \geq \ell-1.
\end{align*}
Therefore, there exists a partition of the $b_j$ as claimed.
\end{proof}

\begin{corollary}
For every $\ell\geq 2$, 
the Lone Divider algorithm can attain \rev{a \tfair{} division} of goods with 
$t_i = \mms{i}{(\ell-1)}{(\ell n -2)}{\objects}$.
\end{corollary}
\begin{proof}
It is sufficient to prove that the $t_i$ are reasonable thresholds for each agent $i$. We verify condition (2) for every integer $k\geq 0$.
Let $N := \ell n-2$, and let $(C_j)_{j\in[N]}$ be a partition of $C$ attaining the maximum in the definition of 
$\mms{i}{\ell-1}{\ell n -2}{\objects}$.
By definition of the MMS, 
the sum of every $(\ell-1)$-tuple of $V_i(C_j)$ is at least $t_i$.
Let $c_j := (\ell-1)\cdot V_i(C_j) / t_i$;
so 
the sum of every $(\ell-1)$-tuple of $c_j$ is at least $\ell-1$.

Suppose we remove some objects whose total value is at most $k\cdot t_i$.
Let $b_j := $ the value of objects remaining at $C_j$, after the removal, multiplied by $(\ell-1) / t_i$;
so $b_j\in[0,c_j]$, and 
$\sum_j b_j \geq (\sum_j c_j) - (\ell-1)\cdot k$.
Lemma \ref{lem:remove-goods-bb} implies that 
the $b_j$ can be partitioned into subsets with a value of at least $\ell-1$.
This corresponds to a partition of the remaining goods into bundles with a value of at least $t_i$.
The number of such bundles is at least

\begin{align*}
\lfloor(N-(\ell-1)k+1)/\ell \rfloor
&=
\left\lfloor \frac{\ell n - 1 - (\ell-1)k}{\ell} \right\rfloor
=
\\
&=
\left\lfloor n - k + \frac{k-1}{\ell} \right\rfloor
\\
&\geq n-k.
\end{align*}
So condition (2) holds, and by Theorem \ref{thm:lone-divider-general}, Lone Divider finds \rev{a \tfair{} division}.
\end{proof}

\subsection{Multiplicative approximation}
The Lone Divider algorithm can also provide a multiplicative approximation, similarly to the ``APX-MMS'' algorithm of \citet{amanatidis2017approximation}. The proof uses the following lemma.

\begin{lemma}
\label{lem:remove2/3}
Let $(b_j)_{j=1}^{N}$, $(c_j)_{j=1}^{N}$  be real numbers 
such that for all $j\in[N]$: 
$b_j\in[0,c_j]$ and $c_j\geq 1$.
If $\sum_j b_j \geq \sum_j c_j - \frac{2n}{3n-1} k$ for some integer $k\geq 0$,
then the $b_j$ can be partitioned into 
$n-k$ subsets such that the sum of each subset is at least $\frac{2n}{3n-1}$.
\end{lemma}
\begin{proof}
Define
\begin{itemize}
\item $d_0$ --- the number of $j$ such that $(c_j-b_j)/c_j\in[0,\frac{n-1}{3n-1}]$.
\item $d_1$ --- the number of $j$ such that $(c_j-b_j)/c_j\in(\frac{n-1}{3n-1},\frac{2n-1}{3n-1}]$.
\item $d_2$ --- the number of $j$ such that $(c_j-b_j)/c_j\in(\frac{2n-1}{3n-1},1]$.
\end{itemize}
We have
\begin{align*}
& d_0+d_1+d_2 = n && 
\\
& \frac{n-1}{3n-1} d_1 + \frac{2n-1}{3n-1} d_2
<
\sum_{j=1}^N (c_j-b_j)/c_j
\leq
\sum_{j=1}^N (c_j-b_j)
&& \text{since $c_j\geq 1$}
\\
&
\leq \frac{2n}{3n-1} k
&& \text{by assumption}
\\
\implies 
& (1/2 - 1/2n)d_1 + (1 - 1/2n)d_2 < k &&  
\\
\implies 
& (d_0+d_1+d_2) - (1/2 - 1/2n)d_1 - (1 - 1/2n)d_2 > n - k &&  
\\
\implies 
& d_1/2 + d_0 + (d_1+d_2)/2n > n - k &&  
\end{align*}
Since $d_1+d_2\leq n$, this implies
\begin{align*}
d_1/2 + d_0 + 1/2 > n - k 
\end{align*}
If $d_1$ is even, then this implies
\begin{align*}
d_1/2 + d_0 \geq  n - k
\end{align*}
If $d_1$ is odd, then the left-hand side is integer, and we get
\begin{align*}
&
d_1/2 + d_0 + 1/2 \geq n - k + 1
\\
\implies &
(d_1-1)/2 + d_0 \geq n - k.
\end{align*}
In both cases 
\begin{align*}
\lfloor d_1/2 \rfloor + d_0 \geq n - k.
\end{align*}
There are $d_0$ indices $j$ for which 
$b_j/c_j \geq  1 - \frac{n-1}{3n-1} = \frac{2n}{3n-1}$, so $b_j\geq \frac{2n}{3n-1}$ too.

There are also $d_1$ indices $j$ for which $b_j/c_j\geq 1 - \frac{2n-1}{3n-1} = \frac{n}{3n-1}$, so $b_j\geq \frac{n}{3n-1}$ too. Pairing these elements gives $\lfloor d_1/2 \rfloor$ pairs with a sum of at least $\frac{2n}{3n-1}$. 
All in all, there are $\lfloor d_1/2 \rfloor + d_0 \geq n - k$ sets
(singletons or pairs) with sum at least $\frac{2n}{3n-1}$.
\end{proof}

\begin{corollary}
The Lone Divider algorithm can attain a fair allocation of discrete goods with $t_i = \frac{2n}{3n-1}\mms{i}{1}{n}{\objects} >  \frac{2}{3}\mms{i}{1}{n}{\objects}$.
\end{corollary}
\begin{proof}
By Theorem \ref{thm:lone-divider-general}, 
it is sufficient to prove that the $t_i$ are reasonable thresholds for each agent $i$.
Let $(C_j)_{j\in[n]}$ be a 1-out-of-$n$ MMS partition of $i$.
Let $c_j := V_i(C_j) / \mms{i}{1}{n}{\objects}$. 
By definition of the MMS, $c_j \geq 1$ for all $j\in[n]$.

For every $k\geq 0$, suppose we remove some objects whose total value is at most $\frac{2n}{3n-1}k\cdot \mms{i}{1}{n}{\objects}$.
Let $b_j := $ the total value remaining in $C_j$, divided by $\mms{i}{1}{n}{\objects}$;
so $b_j\in[0,c_j]$, and the sum of $b_j$ is at least $\sum_j c_j - \frac{2n}{3n-1} k$.
Lemma \ref{lem:remove2/3} implies that the
$b_j$ can be partitioned into $n-k$ subsets with a sum of at least $\frac{2n}{3n-1}$.
This corresponds to a partitioning of the remaining objects into $n-k$ bundles with a value of at least $t_i$. 
Therefore, both conditions (1) and (2) in Definition \ref{def:reasonable} are satisfied.
\end{proof}

\section{Related concepts}
\label{sec:related}

\subsection{Similar concepts with a different name}
Concepts similar to envy-free matching appeared in previous papers related to fair division, but they were ``hidden'' inside proofs of more specific algorithms. One goal of the present paper is to uncover these hidden gems. 

As far as we know, the earliest concept similar to envy-free matching was presented by \citet{Kuhn1967Games}[unnumbered lemma in page 31]. Kuhn presents the lemma in matrix form. In graph terminology, his lemma says that an envy-free matching exists whenever $|X|=|Y|$ and there is a vertex $x\in X$ for whom $N_G(\{x\}) = Y$. This is a special case of our Corollary \ref{thm:sufficient}(b). Kuhn used this lemma in an algorithm for fair cake-cutting, which is now known as the Lone Divider algorithm (see Section \ref{sec:app}).

\citet{procaccia2014fair}[sub.3.1] mention another concept similar to envy-free matching, among proofs of other lemmas related to fair allocation of discrete objects.
They constructed a particular bipartite graph that admits a perfect matching between a subset $X^*\subseteq X$ and a subset $Y^*\subseteq Y$, where there are no edges between $X\sm X^*$ and $Y^*$;
their $X^*$ and $Y^*$ correspond to our $X_L$ and $Y_L$, and the ``no edges'' property corresponds to our Theorem \ref{thm:structure}(a).
Since they were mainly interested in the case of a constant number of players, for which $|X|$ is constant, they did not consider efficient algorithms for computing such a matching.%
\footnote{
This construction did not appear in the journal version \citep{kurokawa2018fair}.
}

Later, 
\citet{amanatidis2017approximation}[lem.4.5]
improved this construction and presented a polynomial-time algorithm for finding a non-empty envy-free matching. Their $X\sm X^+$ and $Y\sm \Gamma(X^+)$ correspond to our $X_L$ and $Y_L$ respectively, and their APX-MMS algorithm corresponds to the Lone Divider algorithm (Algorithm \ref{alg:lone-divider-general}) with threshold values corresponding to $2/3$ fraction of the 1-out-of-$n$ MMS (see Appendix \ref{sec:mms-variants}).

Later, \citet{ghodsi2018fair}[def.3.5] presented a construction that corresponds to our construction as follows:
given a maximum-cardinality matching $M$, 
$F_H(M,\widehat{X})$ corresponds to $Y_L$;~~
$N(F_H(M,\widehat{X}))$ corresponds to $X_L$;%
\footnote{
In fact, as we prove in Theorem \ref{thm:structure}, $Y_L$ and $X_L$ depend only on $G$ and are independent of $M$.
}
~~$\widehat{Y_1}\cup \widehat{Y_2}$ corresponds to $X_S$;~~
and $\widehat{X} \sm F_H(M,\widehat{X})$ corresponds to $Y_S$. Their Lemmas 3.6, 3.7, 3.8 correspond to Theorem \ref{thm:structure} parts (b), (c), (e).

Recently, \citet{bogomolnaia2020guarantees} presented a similar concept which they called a ``proper matching'', for finding a min-max cake allocation when agents have general valuations.

Since all these authors used envy-free matching mainly as an intermediate step in a larger algorithm, they did not consider questions such as the uniqueness of the partition, and did not attempt to find an envy-free matching of maximum cardinality or minimum cost.

Our presentation of envy-free matching as a stand-alone graph-theoretic concept allows us to both simplify old algorithms and design new ones, as illustrated in Sections \ref{sec:app} and \ref{sec:app-objects}. 

\subsection{Different concepts with a similar name}
\label{sub:efm-in-markets}
The term \emph{envy-free matching} is used, in a somewhat more specific sense, in the context of markets, both with and without money.

(1) 
In a market with money, there are several buyers and several goods, and each good may have a price. Given a price-vector, an ``envy-free matching'' is an allocation of bundles to agents in which each agent weakly prefers his bundle over all other bundles, given their respective prices. 
This is a relaxation of a \emph{Walrasian equilibrium}. A Walrasian equilibrium is an envy-free matching in which every item with a positive price is allocated to some agent. In a Walrasian equilibrium, the seller's revenue might be low. This motivates its relaxation to envy-free matching, in which the seller may set reserve-prices (and leave some items with positive price unallocated) in order to increase his expected revenue. See, for example, \citet{Guruswami2005Profitmaximizing,Alaei2012Competitive}.

(2)
In a market without money, there are several people who should be assigned to positions.
For example, several doctors have to be matched for residency in hospitals. 
Each doctor has a preference-relation on hospitals (ranking the hospitals from best to worst), and each hospital has a preference relation on doctors. Each doctor can work in at most one hospital, and each hospital can employ at most a fixed number of doctors (called the capacity of the hospital).     A matching has \emph{justified envy} if there is a doctor $d$ and a hospital $h$, such that $d$ prefers $h$ over his current employer, and $h$ prefers $d$ over one of its current employees.
An ``envy-free matching'' is a matching with no justified envy. 
This is a relaxation of a \emph{stable matching}. A stable matching is an envy-free matching which is also non-wasteful --- 
there is no doctor $d$ and a hospital $h$, such that $d$ prefers $h$ over his current employer and $h$ has some vacant positions \citep{wu2018lattice,yokoi2020envy}.
When the hospitals have, in addition to upper quotas (capacities), also \emph{lower quotas},
a stable matching might not exist. This motivates its relaxation to envy-free matching.

(3)
In contrast, our envy-free matching is an abstract graph-theoretic concept: it is defined for any bipartite graph, and does not require any notion of a price or a ranking.

To differentiate the terms, one can use, for example:

For (1) --- ``price envy-free matching'' or ``market envy-free matching'';

For (2) --- ``no-justified-envy matching'' or ``justified-envy-free matching'';

For (3) --- ``binary envy-free matching'' or ``abstract envy-free matching''.

\clearpage

%\section*{References}
\bibliographystyle{elsarticle-num-names-alpha}
\bibliography{../erel,../elad}

\end{document}